\documentclass[aps,pra,onecolumn,notitlepage,superscriptaddress,nofootinbib,showkeys,10pt]{revtex4-2}

\usepackage{amsmath,amssymb,amsthm,mathtools}
\usepackage{enumerate}
\usepackage{bm}
\usepackage{microtype}
\usepackage{xcolor}
\usepackage[colorlinks=true,linkcolor=blue,citecolor=blue,urlcolor=blue]{hyperref}

\DeclareMathOperator{\Qex}{Qex}

\newcommand{\ket}[1]{\lvert #1\rangle}
\newcommand{\bra}[1]{\langle #1\rvert}

\newtheorem{theorem}{Theorem}
\newtheorem{definition}{Definition}

\newtheorem{lemma}{Lemma}
\newtheorem{proposition}{Proposition}

\newtheorem{corollary}{Corollary}

\newtheorem{remark}[theorem]{Remark}

\begin{document}
\title{Robust Quantum Extremal Numbers}

\author{Wanchen Zhang}
\affiliation{Hefei National Laboratory, University of Science and Technology of China, Hefei 230088, China}

\author{Zicheng Han}
\affiliation{School of Mathematical Sciences,
	University of Science and Technology of China, Hefei 230026, China}

\author{Xiande Zhang}
\email{Corresponding author: drzhangx@ustc.edu.cn}
\affiliation{School of Mathematical Sciences,
University of Science and Technology of China, Hefei 230026, China}
\affiliation{Hefei National Laboratory, University of Science and Technology of China, Hefei 230088, China}

\begin{abstract}
Absolutely maximally entangled states require every reduction of at most half
of the parties to be maximally mixed, a condition that is both rigid and often
impossible for qubit systems. Previous work introduced the quantum extremal
number, which maximizes the number of exactly maximally mixed half-body
marginals, and determined the exact value $\Qex(8,4)=56$. The present work
develops a robust extension of this extremal problem. For a subsystem $A$, the
marginal maximal-mixing defect is defined by
\[
D_A=2^{|A|}\operatorname{Tr}(\rho_A^2)-1
=2^{|A|}\left\|\rho_A-\frac{I_A}{2^{|A|}}\right\|_2^2,
\]
and $Q_{\mathrm{ex},\varepsilon}^{D}(n,k)$ is defined as the maximum
number of $k$-body marginals satisfying $D_A\leq\varepsilon$ in an
$n$-qubit pure state. This counting problem differs from approximate $k$-uniformity, which
requires all $k$-body marginals to obey a common error bound.

For pure states on $4m$ qubits, the following local stability inequality is
established:
\[
\sum_{i\in T}D_{T\setminus\{i\}}\geq1
\qquad (|T|=2m+1).
\]
It follows that, whenever $\varepsilon<1/(2m+1)$, the hypergraph of
$\varepsilon$-good $2m$-subsets is $K_{2m+1}^{(2m)}$-free. Combined with the
known exact eight-qubit construction, this yields the stability plateau
\[
Q_{\mathrm{ex},\varepsilon}^{D}(8,4)=56,
\qquad 0\leq\varepsilon<\frac15.
\]
For odd systems of $2k+1$ qubits, the exact forbidden hypergraph $H_k$ is used
to derive explicit finite-error stability radii. In particular,
$Q_{\mathrm{ex},\varepsilon}^{D}(9,4)\leq120$ for
$0\leq\varepsilon<1/17$. These results turn exact quantum Tur\'an
obstructions into quantitative robustness statements and identify intervals
on which quantum extremal numbers are stable under imperfect marginal
mixedness.
\end{abstract}

\keywords{multipartite entanglement, absolutely maximally entangled states,
quantum extremal numbers, shadow inequalities, Tur\'an theory, marginal maximal-mixing defect}

\maketitle

\section{Introduction}\label{sec:int}

A pure multipartite state is $k$-uniform if every reduction to at most $k$
parties is maximally mixed. At the extremal value
$k=\lfloor n/2\rfloor$, such a state is absolutely maximally entangled (AME)
and is maximally entangled across every bipartition
\cite{Scott2004,Helwig2012,Huber2018Shadow,Shi2025Bounds}. AME states are closely connected
to pure quantum error-correcting codes, perfect tensors, quantum secret
sharing, and combinatorial designs\cite{RevModPhys.80.517,Borras_2007,PhysRevA.87.012319,
	PhysRevA.77.060304,PhysRevA.91.042339,PhysRevA.86.052335,PhysRevA.92.032316,
	Brown_2005,Huber2017Seven}. For qubits, AME states are known to exist for $n=2,3,5,$ and $6$,
whereas the cases $n=4$, $n=7$, and $n\geq8$ are excluded by structural and
enumerator arguments
\cite{HiguchiSudbery2000,Rains1999QuantumShadow,Scott2004,Huber2017Seven}.

When an AME state does not exist, two distinct relaxations are relevant. The
first requires every half-body reduction to be close to the maximally mixed
state, leading to notions of approximate $k$-uniformity.
Guo \emph{et al.} formalized this idea through local Hilbert--Schmidt distance,
or equivalently a local quadratic constraint, and emphasized that exact
$k$-uniformity is neither experimentally realistic nor always necessary in the
presence of finite-depth implementations, compilation errors, and physical
noise \cite{Guo2025Approximate}. The second retains exact maximal mixedness while maximizing the number of
reductions for which the condition holds. This
viewpoint led to the quantum extremal number $\Qex(n,k)$ and its connection to
hypergraph Tur\'an theory \cite{Zhang2025Extremal}. In particular, the exact
eight-qubit value $\Qex(8,4)=56$ was established in
Ref.~\cite{Zhang2025Extremal}.

A second motivation comes from quantum simulation. Zhao, Zhou, and Childs
showed that the distance of local reduced density matrices from the maximally
mixed state controls state-dependent product-formula errors; sufficiently
entangled, approximately $k$-uniform inputs can attain the average-case Trotter
error scaling \cite{ZhaoZhouChilds2025Entanglement}. Accordingly, a local deviation parameter has significance beyond the
mathematical relaxation of exact maximal mixedness: it can identify physically
relevant regimes in which imperfectly uniform marginals retain the simulation
advantages associated with highly entangled states.

The present work combines these ideas while addressing an optimization problem
distinct from approximate $k$-uniformity. For a fixed error threshold, the
objective is to maximize the \emph{number} of marginals satisfying that
threshold, with no constraint imposed on the remaining marginals. This leads to
a robust quantum extremal number and permits local quantum consistency
conditions to be translated into forbidden-hypergraph statements at nonzero
error. The local parameter is termed the \emph{marginal maximal-mixing
defect}, emphasizing the deviation of a marginal from maximal mixedness rather
than introducing purity as the primary concept. As a mathematical functional,
it coincides with the quantum $\chi^2$ divergence from the maximally mixed
state \cite{Temme2010Chi2}; the terminology is specific to the present
marginal-extremal setting. It should not be confused
with the ``uniformity defect'' used to
measure the difference between $k$ and $\lfloor n/2\rfloor$ in recent Scott-type
bounds \cite{ZengZhang2026}. Recent extensions of MacWilliams and shadow
identities to heterogeneous systems further illustrate the continuing role of
enumerator methods in multipartite entanglement
\cite{GonzalezLocigaBall2026}.

\subsection{Main results}

For a subsystem $A$ of dimension $d_A=2^{|A|}$, define the
\emph{marginal maximal-mixing defect}
\begin{equation}
D_A:=d_A\operatorname{Tr}(\rho_A^2)-1
=d_A\left\|\rho_A-\frac{I_A}{d_A}\right\|_2^2.
\label{eq:intro-defect}
\end{equation}
This definition is motivated by the Hilbert--Schmidt formulation of
approximate $k$-uniform states introduced by Guo \emph{et al.}, who quantify
approximate local maximal mixing through the Hilbert--Schmidt proximity of
every relevant marginal to $I_A/d_A$ and emphasize the relevance of such a
relaxation for finite-depth implementations, compilation errors, and physical
noise \cite{Guo2025Approximate}. If their error parameter is denoted by
$\eta$, so that $\|\rho_A-I_A/d_A\|_2\leq\eta$, then the corresponding
condition in the present normalization is $D_A\leq d_A\eta^2$. The factor
$d_A$ renders $D_A$ dimensionless and, as shown below, makes it equal to the
total squared nonidentity Pauli weight supported inside $A$.

The same defect also controls the trace-norm deviation from maximal mixing:
\begin{equation}
\left\|\rho_A-\frac{I_A}{d_A}\right\|_1
\leq \sqrt{d_A}\left\|\rho_A-\frac{I_A}{d_A}\right\|_2
=\sqrt{D_A}.
\label{eq:defect-trace-conversion-intro}
\end{equation}
Consequently, $D_A\leq\varepsilon$ implies
$\|\rho_A-I_A/d_A\|_1\leq\sqrt{\varepsilon}$. This state-level relation has a
direct connection with Hamiltonian-simulation error analysis. Zhao, Zhou,
and Childs derive state-dependent product-formula bounds in which the
distance of local reduced density matrices from the maximally mixed state
enters explicitly, and they show that sufficiently approximate
$k$-uniformity can, under appropriate conditions, recover average-case
Trotter-error scaling \cite{ZhaoZhouChilds2025Entanglement}. Thus the
threshold $D_A\leq\varepsilon$ provides a dimension-adjusted local criterion
that is compatible both with the approximate-$k$-uniform framework and with
physically motivated state-dependent quantum-simulation bounds.

For $\varepsilon\geq0$, a $k$-subset is called $\varepsilon$-good when
$D_A\leq\varepsilon$. The robust quantum extremal number
$Q_{\mathrm{ex},\varepsilon}^{D}(n,k)$ is the largest possible number of
$\varepsilon$-good $k$-subsets among all pure $n$-qubit states.

The principal results are summarized as follows.
\begin{enumerate}[(i)]
\item For a pure state on $4m$ qubits and every $(2m+1)$-subset $T$, the
following bound is established:
\[
\sum_{i\in T}D_{T\setminus\{i\}}\geq1.
\]
The proof combines a Schmidt-rank lower bound, the zeroth Rains shadow
inequality, and an exact Pauli-weight deletion identity.

\item Consequently, for $\varepsilon<1/(2m+1)$, the hypergraph of good
$2m$-subsets is $K_{2m+1}^{(2m)}$-free and hence is bounded by the
corresponding Tur\'an number. For eight qubits, the exact construction from
Ref.~\cite{Zhang2025Extremal} and the identity
$\operatorname{ex}_4(8,K_5^{(4)})=56$ give
\[
Q_{\mathrm{ex},\varepsilon}^{D}(8,4)=56
\quad\text{for}\quad 0\leq\varepsilon<\frac15.
\]
Hence the extremal count remains constant throughout the certified interval
$0\leq\varepsilon<1/5$, yielding a finite-error stability plateau.

\item For odd systems $n=2k+1$, the forbidden hypergraph $H_k$ from the
exact theory yields explicit finite-error extensions. If $k\geq4$ is
even, $H_k$ remains forbidden for
\[
\varepsilon<\frac{k-2}{(k+2)(k+1)+4},
\]
while for odd $k\geq7$ it remains forbidden for
\[
\varepsilon<\frac{k-5}{(k+2)(k+1)+8}.
\]
For nine qubits this yields
$Q_{\mathrm{ex},\varepsilon}^{D}(9,4)\leq120$ for
$\varepsilon<1/17$.

\item The analysis also identifies the limitations of the method: all
threshold inequalities are strict; no optimality claim is made for the
certified radii; the present bounds do not yield an explicit positive radius
for seven qubits; and the odd case $k=5$ requires a different forbidden
pattern or additional quantum constraints.
\end{enumerate}

\subsection{Relation to the exact quantum extremal-number framework}

The present manuscript develops a robust extension of
Ref.~\cite{Zhang2025Extremal}; the published exact results of that reference
are not re-derived here. In particular,
$\Qex(8,4)=56$, the exact eight-qubit construction, the zero-error odd-system
forbidden pattern $H_k$, and the associated zero-error covering bound belong
to the exact framework of Ref.~\cite{Zhang2025Extremal}. They are included here to keep the finite-error statements self-contained.
The new content is the defect formulation, the local inequalities valid at nonzero error, the
stability intervals, the eight-qubit plateau for $0<\varepsilon<1/5$, and the
explicit odd-system robustness bounds.

\subsection{Organization}

Section~\ref{sec:prelim} introduces the marginal maximal-mixing defect and its Pauli
representation. Section~\ref{sec:shadow-turan} proves the shadow--rank local
bound and its Tur\'an consequences for $4m$ qubits. Section
\ref{sec:odd-qubit-stability} develops the odd-qubit stability theory.
Section~\ref{sec:discussion} summarizes the scope and open problems.

\section{Preliminaries and the Marginal Maximal-Mixing Defect}\label{sec:prelim}

Let $[n]:=\{1,\ldots,n\}$ and let $\binom{[n]}{k}$ denote the collection of
$k$-subsets of $[n]$. For a pure state $|\psi\rangle$ and a subsystem
$A\subseteq[n]$, write
\[
\rho_A=\operatorname{Tr}_{A^c}(|\psi\rangle\langle\psi|).
\]
The marginal maximal-mixing defect is
\begin{equation}
D_A:=2^{|A|}\operatorname{Tr}(\rho_A^2)-1
=2^{|A|}\left\|\rho_A-\frac{I_A}{2^{|A|}}\right\|_2^2.
\label{eq:defect-hs-relation}
\end{equation}
Thus $D_A\geq0$, with equality if and only if $\rho_A$ is maximally mixed.
The term ``defect'' refers to the deviation from this maximal-mixing condition.

For a maximally mixed reference state, the standard family of quantum
$\chi^2$ divergences reduces to the same expression because $I_A/d_A$
commutes with $\rho_A$
\cite{Temme2010Chi2}:
\begin{equation}
\chi^2\!\left(\rho_A\middle\|\frac{I_A}{d_A}\right)
=d_A\operatorname{Tr}\!\left(\rho_A-\frac{I_A}{d_A}\right)^2
=D_A.
\label{eq:defect-chi-square}
\end{equation}
The context-specific term \emph{marginal maximal-mixing defect} is used to
emphasize the deviation of a marginal from maximal mixing; no new divergence
is introduced. The conversion
to the Hilbert--Schmidt and trace-norm errors is
\begin{align}
D_A\leq\varepsilon
&\quad\Longleftrightarrow\quad
\left\|\rho_A-\frac{I_A}{2^{|A|}}\right\|_2
\leq\sqrt{\frac{\varepsilon}{2^{|A|}}},
\label{eq:defect-hs-threshold}\\
D_A\leq\varepsilon
&\quad\Longrightarrow\quad
\left\|\rho_A-\frac{I_A}{2^{|A|}}\right\|_1
\leq\sqrt{\varepsilon}.
\label{eq:defect-trace-threshold}
\end{align}

For $1\leq k\leq\lfloor n/2\rfloor$, define
\begin{equation}
\mathcal{G}_{\varepsilon}^{(k)}(\psi)
:=\left\{A\in\binom{[n]}{k}:D_A\leq\varepsilon\right\}
\label{eq:good-family-global}
\end{equation}
and
\begin{equation}
Q_{\mathrm{ex},\varepsilon}^{D}(n,k)
:=\max_{|\psi\rangle}\left|\mathcal{G}_{\varepsilon}^{(k)}(\psi)\right|.
\label{eq:robust-qex-global}
\end{equation}
At zero error this reduces to the exact quantum extremal number:
$Q_{\mathrm{ex},0}^{D}(n,k)=\Qex(n,k)$.

\subsection{Pauli representation of the marginal defect}

	Let \(\rho\) be an \(n\)-qubit state with Pauli expansion
\begin{equation}
	\rho=2^{-n}\sum_{\alpha} r_\alpha \sigma_\alpha ,
	\qquad
	r_\alpha=\operatorname{Tr}(\rho\sigma_\alpha),
\end{equation}
where \(\{\sigma_\alpha\}\) denotes the \(n\)-qubit Pauli-string basis and
\(\sigma_0=I^{\otimes n}\).

\begin{lemma}[Pauli-coefficient formula for the marginal defect]
\label{lem:marginal-defect-pauli}
 For any subsystem \(K\subset[n]\), let
	\[
	D_K=d_K\operatorname{Tr}\rho_K^2-1,
	\qquad
	d_K=2^{|K|}.
	\]
	Then
	\[
	D_K=
	\sum_{\alpha\neq 0:\operatorname{supp}(\sigma_\alpha)\subseteq K}
	r_\alpha^2 .
	\]
\end{lemma}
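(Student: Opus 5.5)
The plan is to compute the partial trace of the Pauli expansion term by term, and then apply Hilbert--Schmidt orthogonality of Pauli strings on $K$.

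\textbf{Step 1: Pauli expansion of the marginal.} Write each Pauli string as a tensor product $\sigma_\alpha=\sigma_{\alpha_K}\otimes\sigma_{\alpha_{K^c}}$ of its restrictions to $K$ and to $K^c$. The single-qubit Paulis $X,Y,Z$ are traceless and $\operatorname{Tr} I=2$. Hence
\[
\operatorname{Tr}_{K^c}\sigma_\alpha=
\begin{cases}
2^{n-|K|}\,\sigma_{\alpha_K}, & \text{if } \sigma_{\alpha_{K^c}}=I^{\otimes(n-|K|)},\\
0, & \text{otherwise}.
\end{cases}
\]
Substituting into the expansion of $\rho$ gives
\[
\rho_K=2^{-|K|}\sum_{\alpha:\operatorname{supp}(\sigma_\alpha)\subseteq K} r_\alpha\,\sigma_{\alpha_K}.
\]
The strings $\sigma_{\alpha_K}$ appearing here are exactly the $4^{|K|}$ Pauli strings on $K$, each occurring once.

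\textbf{Step 2: orthogonality.} Pauli strings on $K$ satisfy $\operatorname{Tr}(\sigma_\beta\sigma_\gamma)=2^{|K|}\delta_{\beta\gamma}$, and the coefficients $r_\alpha$ are real because $\rho$ and $\sigma_\alpha$ are Hermitian. Squaring the expression from Step 1 and taking the trace gives
\[
\operatorname{Tr}\rho_K^2=2^{-2|K|}\cdot 2^{|K|}\sum_{\operatorname{supp}\subseteq K} r_\alpha^2=d_K^{-1}\sum_{\operatorname{supp}\subseteq K} r_\alpha^2 .
\]

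\textbf{Step 3: isolate the identity term.} Multiply by $d_K$ and use $r_0=\operatorname{Tr}\rho=1$ to separate the $\alpha=0$ term:
\[
D_K=d_K\operatorname{Tr}\rho_K^2-1=\sum_{\alpha\neq0,\ \operatorname{supp}\subseteq K} r_\alpha^2 .
\]
This is the claimed formula.

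There is no real obstacle. The only point needing care is the bookkeeping in Step 1: each surviving term acquires a factor $2^{n-|K|}$ from the partial trace, and this combines with the prefactor $2^{-n}$ to give the correct normalization $2^{-|K|}$ on $K$.
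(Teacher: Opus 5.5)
Your proposal is correct and follows essentially the same route as the paper's proof: it takes the partial trace of the Pauli expansion so that only strings supported in $K$ survive with normalization $2^{-|K|}$, applies Hilbert--Schmidt orthogonality to compute $\operatorname{Tr}\rho_K^2$, and subtracts the identity contribution $r_0^2=1$. Your explicit notes that the $r_\alpha$ are real and that the surviving restrictions are exactly the distinct Pauli strings on $K$ are small clarifications the paper leaves implicit.
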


\begin{proof}
	Let \(|K|=k\), so that \(d_K=2^k\). The \(n\)-qubit Pauli strings satisfy
	\[
	\operatorname{Tr}(\sigma_\alpha\sigma_\beta)
	=
	2^n\delta_{\alpha\beta}.
	\]
	Hence the Pauli coefficients are
	\[
	r_\alpha=\operatorname{Tr}(\rho\sigma_\alpha),
	\]
	and in particular the coefficient of the identity string satisfies
	\[
	r_0=\operatorname{Tr}\rho=1.
	\]
	
	Taking the partial trace over \(\bar K\), we obtain
	\[
	\rho_K
	=
	\operatorname{Tr}_{\bar K}\rho
	=
	2^{-n}
	\sum_\alpha
	r_\alpha\operatorname{Tr}_{\bar K}(\sigma_\alpha).
	\]
	If \(\sigma_\alpha\) acts nontrivially on at least one qubit in \(\bar K\), then
	\[
	\operatorname{Tr}_{\bar K}(\sigma_\alpha)=0,
	\]
	because
	\[
	\operatorname{Tr}X=\operatorname{Tr}Y=\operatorname{Tr}Z=0.
	\]
	Consequently, the partial trace retains precisely those Pauli strings with
	\[
	\operatorname{supp}(\sigma_\alpha)\subseteq K.
	\] For such strings,
	\[
	\operatorname{Tr}_{\bar K}(\sigma_\alpha)
	=
	2^{n-k}\sigma_\alpha^K,
	\]
	where \(\sigma_\alpha^K\) denotes the restriction of \(\sigma_\alpha\) to \(K\). Thus
	\[
	\rho_K
	=
	2^{-k}
	\sum_{\operatorname{supp}(\sigma_\alpha)\subseteq K}
	r_\alpha\sigma_\alpha^K .
	\]
	
	Now compute the purity. Since Pauli strings on \(K\) satisfy
	\[
	\operatorname{Tr}(\sigma_\alpha^K\sigma_\beta^K)
	=
	2^k\delta_{\alpha\beta},
	\]
	we have
	\begin{align}
		\operatorname{Tr}\rho_K^2
		&=
		2^{-2k}
		\sum_{\operatorname{supp}(\sigma_\alpha)\subseteq K}
		\sum_{\operatorname{supp}(\sigma_\beta)\subseteq K}
		r_\alpha r_\beta
		\operatorname{Tr}(\sigma_\alpha^K\sigma_\beta^K) \notag\\
		&=
		2^{-2k}
		\sum_{\operatorname{supp}(\sigma_\alpha)\subseteq K}
		r_\alpha^2\,2^k \notag\\
		&=
		2^{-k}
		\sum_{\operatorname{supp}(\sigma_\alpha)\subseteq K}
		r_\alpha^2 .
	\end{align}
	Multiplying by \(d_K=2^k\) gives
	\[
	d_K\operatorname{Tr}\rho_K^2
	=
	\sum_{\operatorname{supp}(\sigma_\alpha)\subseteq K}
	r_\alpha^2 .
	\]
	The identity string contributes \(r_0^2=1\), hence
	\[
	d_K\operatorname{Tr}\rho_K^2
	=
	1+
	\sum_{\alpha\neq 0:\operatorname{supp}(\sigma_\alpha)\subseteq K}
	r_\alpha^2 .
	\]
	By the definition
	\[
	D_K=d_K\operatorname{Tr}\rho_K^2-1,
	\]
	we obtain
	\[
	D_K=
	\sum_{\alpha\neq 0:\operatorname{supp}(\sigma_\alpha)\subseteq K}
	r_\alpha^2 .
	\]
\end{proof}

\begin{lemma}[Weight-enumerator form of the maximal-mixing defect]
\label{lem:pauli-maximal-mixing-defect}
Let $\rho$ be an $s$-qubit density operator with Pauli expansion
\[
\rho=2^{-s}\sum_{P\in\mathcal{P}_s}r_PP,
\qquad
\mathcal{P}_s=\{I,X,Y,Z\}^{\otimes s},
\]
and define
\[
a_j(\rho):=
\sum_{\substack{P\in\mathcal{P}_s\\ \operatorname{wt}(P)=j}}r_P^2.
\]
Then
\[
2^s\operatorname{Tr}(\rho^2)=\sum_{j=0}^{s}a_j(\rho),
\qquad
D(\rho)=\sum_{j=1}^{s}a_j(\rho).
\]
\end{lemma}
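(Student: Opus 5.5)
The plan is to apply Lemma~\ref{lem:marginal-defect-pauli} with the whole system in the role of the subsystem, and then reorganize the resulting sum of squared Pauli coefficients by weight.

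\textbf{Step 1 (purity as a sum of squares).} Take $n=s$ and $K=[s]$ in Lemma~\ref{lem:marginal-defect-pauli}. Then $\rho_K=\rho$ and $d_K=2^s$. Every Pauli string satisfies $\operatorname{supp}(P)\subseteq[s]$, so the support restriction in that lemma is vacuous. Its intermediate identity therefore reads
\[
2^s\operatorname{Tr}(\rho^2)=\sum_{P\in\mathcal{P}_s}r_P^2 .
\]
This can also be checked directly from the orthogonality $\operatorname{Tr}(PQ)=2^s\delta_{PQ}$:
\[
\operatorname{Tr}\rho^2=2^{-2s}\sum_{P,Q}r_Pr_Q\operatorname{Tr}(PQ)=2^{-s}\sum_P r_P^2 .
\]

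\textbf{Step 2 (grouping by weight).} The Pauli strings in $\mathcal{P}_s$ are partitioned according to their weight $\operatorname{wt}(P)\in\{0,1,\dots,s\}$. Summing $r_P^2$ over each weight class gives exactly $a_j(\rho)$, so
\[
2^s\operatorname{Tr}(\rho^2)=\sum_{j=0}^{s}a_j(\rho).
\]

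\textbf{Step 3 (isolating the identity term).} The only string of weight $0$ is $P=I^{\otimes s}$. Its coefficient is $r_I=\operatorname{Tr}\rho=1$, so $a_0(\rho)=1$. Subtracting $1$ from both sides of the identity in Step 2, and recalling $D(\rho)=2^s\operatorname{Tr}(\rho^2)-1$, gives
\[
D(\rho)=\sum_{j=1}^{s}a_j(\rho).
\]
This is also the $K=[s]$ case of the formula in Lemma~\ref{lem:marginal-defect-pauli}.

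There is no substantive obstacle here. The only points requiring care are the normalization $r_I=1$ and noting that the coefficients $r_P=\operatorname{Tr}(\rho P)$ are real, because $\rho$ and the Pauli strings are Hermitian. Realness is what makes $r_P^2$ the correct squared modulus, so that the Hilbert--Schmidt computation in Step 1 is valid.
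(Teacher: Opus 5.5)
Your proposal is correct and matches the paper's proof: specialize Lemma~\ref{lem:marginal-defect-pauli} to $K=[s]$, group the Pauli strings by weight, and note that $a_0(\rho)=r_I^2=1$ accounts for the subtracted constant in $D(\rho)$. The direct orthogonality check and the remark that the $r_P$ are real are sound additions; the paper does not spell them out.
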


\begin{proof}
Apply Lemma~\ref{lem:marginal-defect-pauli} to the full $s$-qubit subsystem
and group the nonidentity Pauli strings according to their weights.
\end{proof}

\begin{lemma}[Rank-induced lower bound on the maximal-mixing defect]
	\label{lem:rank-maximal-mixing-defect}
	Let $\ket{\psi}$ be a pure state on $4m$ qubits, where $m\geq 1$,
	and let
	\[
	T\subseteq [4m],
	\qquad
	|T|=2m+1.
	\]
	Denote the reduced density operator on $T$ by
	\[
	\rho_T
	=
	\operatorname{Tr}_{T^c}
	\bigl(\ket{\psi}\!\bra{\psi}\bigr),
	\]
	and define its marginal maximal-mixing defect by
	\[
	D_T
	:=
	2^{|T|}\operatorname{Tr}(\rho_T^2)-1.
	\]
	Then
	\[
	\boxed{D_T\geq 3.}
	\]
	
	More precisely,
	\[
	\operatorname{rank}(\rho_T)\leq 2^{2m-1}
	\]
	and
	\[
	\operatorname{Tr}(\rho_T^2)
	\geq
	2^{-(2m-1)}.
	\]
	Equality $D_T=3$ holds if and only if $\rho_T$ has rank
	$2^{2m-1}$ and all of its nonzero eigenvalues are equal to
	$2^{-(2m-1)}$.
\end{lemma}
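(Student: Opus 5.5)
The plan is to bound the rank of $\rho_T$ through the Schmidt decomposition of $\ket{\psi}$ across the bipartition $T\,|\,T^c$, and then convert the rank bound into a purity bound with Cauchy--Schwarz. Since $|T^c|=4m-(2m+1)=2m-1$, the Schmidt decomposition
\[
\ket{\psi}=\sum_{j=1}^{r}\sqrt{\lambda_j}\,\ket{u_j}_T\ket{v_j}_{T^c}
\]
has at most $\dim\mathcal{H}_{T^c}=2^{2m-1}$ terms. The nonzero spectra of $\rho_T$ and $\rho_{T^c}$ coincide, both being $\{\lambda_j\}$. Hence
\[
\operatorname{rank}(\rho_T)=\operatorname{rank}(\rho_{T^c})\leq 2^{2m-1}.
\]
This is the first claimed inequality.

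Next, let $\lambda_1,\dots,\lambda_r$ be the nonzero eigenvalues of $\rho_T$, with $r\leq 2^{2m-1}$ and $\sum_j\lambda_j=1$. Cauchy--Schwarz gives
\[
1=\Bigl(\sum_{j=1}^r\lambda_j\Bigr)^2\leq r\sum_{j=1}^r\lambda_j^2=r\operatorname{Tr}(\rho_T^2),
\]
so $\operatorname{Tr}(\rho_T^2)\geq 1/r\geq 2^{-(2m-1)}$. Multiplying by $2^{|T|}=2^{2m+1}$ yields $2^{2m+1}\operatorname{Tr}(\rho_T^2)\geq 2^{2}=4$, and therefore $D_T\geq 3$.

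For the equality case, $D_T=3$ forces both inequalities in the chain $\operatorname{Tr}(\rho_T^2)\geq 1/r\geq 2^{-(2m-1)}$ to be tight. The second is tight exactly when $r=2^{2m-1}$. The Cauchy--Schwarz step is tight exactly when all $\lambda_j$ are equal, so each equals $1/r=2^{-(2m-1)}$. Conversely, if $\rho_T$ has rank $2^{2m-1}$ with all nonzero eigenvalues equal to $2^{-(2m-1)}$, then $\operatorname{Tr}(\rho_T^2)=2^{-(2m-1)}$ and $D_T=3$.

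There is no real obstacle here. The only point needing care is the equality characterization, which requires tracking both inequalities at once rather than just the final one.
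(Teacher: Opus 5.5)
Your proposal is correct and follows the paper's proof essentially step for step: a Schmidt-decomposition rank bound across $T\,|\,T^c$, then Cauchy--Schwarz to get $\operatorname{Tr}(\rho_T^2)\geq 1/r\geq 2^{-(2m-1)}$, and saturation of both inequalities for the equality case. Your explicit check of the converse direction (rank $2^{2m-1}$ with equal eigenvalues gives $D_T=3$) is a small addition that the paper leaves implicit.
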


\begin{proof}
	Since the global state $\ket{\psi}$ is pure, the two reduced density
	operators $\rho_T$ and $\rho_{T^c}$ have the same nonzero eigenvalues.
	This follows directly from the Schmidt decomposition of $\ket{\psi}$
	across the bipartition
	\[
	T\,|\,T^c.
	\]
	In particular,
	\[
	\operatorname{rank}(\rho_T)
	=
	\operatorname{rank}(\rho_{T^c}).
	\]
	
	Because
	\[
	|T^c|
	=
	4m-(2m+1)
	=
	2m-1,
	\]
	the Hilbert-space dimension of the complementary subsystem is
	\[
	\dim \mathcal{H}_{T^c}
	=
	2^{2m-1}.
	\]
	Consequently,
	\[
	\operatorname{rank}(\rho_T)
	=
	\operatorname{rank}(\rho_{T^c})
	\leq
	2^{2m-1}.
	\]
	
	The rank--purity bound is applied next. Let $\rho$ be a density operator
	of rank $r$, with nonzero eigenvalues
	$\lambda_1,\ldots,\lambda_r$. Since
	\[
	\sum_{j=1}^r\lambda_j=1,
	\]
	the Cauchy--Schwarz inequality gives
	\[
	1
	=
	\left(\sum_{j=1}^r\lambda_j\right)^2
	\leq
	r\sum_{j=1}^r\lambda_j^2
	=
	r\,\operatorname{Tr}(\rho^2).
	\]
	Therefore,
	\[
	\operatorname{Tr}(\rho^2)\geq\frac{1}{r}.
	\]
	
	Applying this inequality to $\rho_T$, and using
	$\operatorname{rank}(\rho_T)\leq 2^{2m-1}$, we obtain
	\[
	\operatorname{Tr}(\rho_T^2)
	\geq
	\frac{1}{\operatorname{rank}(\rho_T)}
	\geq
	2^{-(2m-1)}.
	\]
	Since $|T|=2m+1$, it follows that
	\[
	\begin{aligned}
		D_T
		&=
		2^{2m+1}\operatorname{Tr}(\rho_T^2)-1\\
		&\geq
		2^{2m+1}2^{-(2m-1)}-1\\
		&=
		2^2-1\\
		&=
		3.
	\end{aligned}
	\]
	
	Finally, equality holds precisely when both inequalities used above
	are saturated. Thus,
	\[
	\operatorname{rank}(\rho_T)=2^{2m-1},
	\]
	and the nonzero eigenvalues of $\rho_T$ must all be equal. Hence
	\[
	\lambda_1=\cdots=\lambda_{2^{2m-1}}
	=
	2^{-(2m-1)}.
	\]
	Equivalently, there exists a rank-$2^{2m-1}$ orthogonal projector
	$\Pi_T$ such that
	\[
	\rho_T
	=
	2^{-(2m-1)}\Pi_T.
	\]
	This completes the proof.
\end{proof}

\section{From a Qubit Shadow Inequality to a Quantum Turán Bound}
\label{sec:shadow-turan}

Let
\[
\mathcal{P}_s=\{I,X,Y,Z\}^{\otimes s}
\]
denote the unnormalized $s$-qubit Pauli basis. For an $s$-qubit
density operator $\rho$, write
\[
\rho
=
2^{-s}\sum_{P\in\mathcal{P}_s}r_P P,
\qquad
r_P:=\operatorname{Tr}(\rho P),
\]
and define
\[
a_j(\rho)
:=
\sum_{\substack{P\in\mathcal{P}_s\\
		\operatorname{wt}(P)=j}}
r_P^2,
\qquad
j=0,1,\ldots,s.
\]
By Lemma~\ref{lem:pauli-maximal-mixing-defect},
\[
D(\rho)
:=
2^s\operatorname{Tr}(\rho^2)-1
=
\sum_{j=1}^s a_j(\rho).
\]

For a reduced state $\rho_T$, we write
\[
a_j(T):=a_j(\rho_T),
\qquad
D_T:=D(\rho_T).
\]

\subsection{A qubit shadow inequality in Pauli-weight form}

The following inequality is the Pauli-weight form of a qubit special case of
Rains' quantum shadow inequalities~\cite{Rains1999QuantumShadow}. In the
present qubit setting, it admits a short direct derivation through the
state-inversion map introduced below; this derivation also fixes the
normalization used throughout the subsequent argument.

\begin{lemma}[Qubit shadow inequality]
	\label{lem:qubit-shadow}
	Let $\rho$ be an $s$-qubit density operator. Then
\begin{equation}
	\sum_{j=0}^s(-1)^j a_j(\rho)\geq 0.
\end{equation}
\end{lemma}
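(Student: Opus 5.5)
The plan is to use the state-inversion (spin-flip) map $\tilde\rho := Y^{\otimes s}\rho^{T}Y^{\otimes s}$, where $\rho^T$ is the transpose in the computational basis, and to evaluate $\operatorname{Tr}(\rho\tilde\rho)$ in two ways. First, I will check that $\tilde\rho$ is again a density operator. Transposition preserves positivity and trace, and conjugation by the unitary $Y^{\otimes s}$ preserves both as well. Hence $\tilde\rho\geq 0$ and $\operatorname{Tr}\tilde\rho=1$. Since $\rho$ and $\tilde\rho$ are both positive semidefinite, $\operatorname{Tr}(\rho\tilde\rho)=\operatorname{Tr}(\rho^{1/2}\tilde\rho\rho^{1/2})\geq 0$.

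Second, I will compute $\operatorname{Tr}(\rho\tilde\rho)$ in the Pauli basis. On a single qubit, $I^T=I$, $X^T=X$, $Z^T=Z$ and $Y^T=-Y$. Moreover, $YIY=I$, $YXY=-X$, $YYY=Y$ and $YZY=-Z$. Combining the two rules, the map $\sigma\mapsto Y\sigma^TY$ fixes $I$ and sends each of $X$, $Y$, $Z$ to its negative. Because the operation factorizes over tensor products, every Pauli string $P$ is sent to $(-1)^{\operatorname{wt}(P)}P$. The coefficients $r_P=\operatorname{Tr}(\rho P)$ are real since $\rho$ and $P$ are Hermitian. Therefore
\[
\tilde\rho=2^{-s}\sum_{P\in\mathcal{P}_s}(-1)^{\operatorname{wt}(P)}r_PP.
\]

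Using the orthogonality relation $\operatorname{Tr}(PQ)=2^s\delta_{PQ}$, exactly as in the proof of Lemma~\ref{lem:marginal-defect-pauli}, I obtain
\[
\operatorname{Tr}(\rho\tilde\rho)=2^{-s}\sum_{P}(-1)^{\operatorname{wt}(P)}r_P^2=2^{-s}\sum_{j=0}^s(-1)^ja_j(\rho).
\]
Comparing this with the nonnegativity established in the first step gives the claim.

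The only step needing care is the sign bookkeeping: the transpose must be taken in the computational basis, and it is $Y$ alone that acquires the sign $-1$ under transposition. Once the single-qubit table is checked, the rest is routine.
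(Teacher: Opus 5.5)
Your proposal is correct and follows essentially the same route as the paper: the spin-flip map $Y^{\otimes s}\rho^{T}Y^{\otimes s}$ sends each Pauli string $P$ to $(-1)^{\operatorname{wt}(P)}P$, Pauli orthogonality turns $\operatorname{Tr}(\rho\tilde\rho)$ into $2^{-s}\sum_j(-1)^ja_j(\rho)$, and nonnegativity follows from $\operatorname{Tr}(\rho^{1/2}\tilde\rho\rho^{1/2})\geq 0$. Your single-qubit sign table ($Y^T=-Y$, $YYY=Y$, $YXY=-X$, $YZY=-Z$) is right.
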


\begin{proof}
	Consider the qubit state-inversion (spin-flip) map
	\[
	\Theta_s(\rho)
	:=
	Y^{\otimes s}\rho^{T}Y^{\otimes s},
	\]
	where the transpose is taken in the computational basis.
	
	Since $\rho\succeq 0$, one has
	\[
	\rho^T\succeq 0.
	\]
	Moreover, $Y^{\otimes s}$ is unitary, and therefore
	\[
	\Theta_s(\rho)\succeq 0.
	\]
	
	For the single-qubit Pauli matrices,
	\[
	YI^TY=I,
	\qquad
	YX^TY=-X,
	\qquad
	YY^TY=-Y,
	\qquad
	YZ^TY=-Z.
	\]
	Consequently, for every Pauli string $P\in\mathcal{P}_s$,
	\[
	\Theta_s(P)
	=
	(-1)^{\operatorname{wt}(P)}P.
	\]
	Applying $\Theta_s$ to the Pauli expansion of $\rho$ gives
	\[
	\Theta_s(\rho)
	=
	2^{-s}
	\sum_{P\in\mathcal{P}_s}
	(-1)^{\operatorname{wt}(P)}r_PP.
	\]
	
	Using the Pauli orthogonality relation
	\[
	\operatorname{Tr}(PQ)
	=
	2^s\delta_{P,Q},
	\]
	we obtain
	\[
	\begin{aligned}
		\operatorname{Tr}\bigl(\rho\Theta_s(\rho)\bigr)
		&=
		2^{-2s}
		\sum_{P,Q\in\mathcal{P}_s}
		r_Pr_Q
		(-1)^{\operatorname{wt}(Q)}
		\operatorname{Tr}(PQ)\\
		&=
		2^{-s}
		\sum_{P\in\mathcal{P}_s}
		(-1)^{\operatorname{wt}(P)}r_P^2\\
		&=
		2^{-s}
		\sum_{j=0}^s(-1)^ja_j(\rho).
	\end{aligned}
	\]
	
	Although the product $\rho\Theta_s(\rho)$ need not itself be
	positive semidefinite, its trace is nonnegative. Indeed, by cyclicity
	of the trace,
	\[
	\operatorname{Tr}\bigl(\rho\Theta_s(\rho)\bigr)
	=
	\operatorname{Tr}
	\left(
	\rho^{1/2}\Theta_s(\rho)\rho^{1/2}
	\right),
	\]
	and
	\[
	\rho^{1/2}\Theta_s(\rho)\rho^{1/2}\succeq 0.
	\]
	Hence
	\[
	\operatorname{Tr}\bigl(\rho\Theta_s(\rho)\bigr)\geq 0,
	\]
	which proves
	\[
	\sum_{j=0}^s(-1)^ja_j(\rho)\geq 0.
	\]
\end{proof}

The quantity in Lemma~\ref{lem:qubit-shadow} is the zeroth quantum
shadow coefficient of Rains \cite{Rains1999QuantumShadow}. The direct state-inversion proof above fixes the normalization used
throughout this paper and shows explicitly why the relevant alternating
Pauli-weight sum is nonnegative. More general shadow and MacWilliams identities have been used to
derive AME nonexistence bounds and quantum-code constraints
\cite{Huber2018Shadow}.

\subsection{The even-weight Pauli mass}
Throughout this subsection, the total number of qubits is assumed to satisfy
$N\equiv0\pmod4$.

	Let $\ket{\psi}$ be a pure state on $4m$ qubits, and let
$
T\subseteq[4m]$ and $|T|=s=2m+1.
$
Define
\begin{equation}
E_T
:=
\sum_{\substack{1\leq j\leq s-1\\
		j\ \mathrm{even}}}
a_j(T)
\end{equation}
and
\begin{equation}
O_T
:=
\sum_{\substack{1\leq j\leq s-1\\
		j\ \mathrm{odd}}}
a_j(T).
\end{equation}

\begin{lemma}[Lower bound on the non-full even-weight Pauli mass]
	\label{lem:even-pauli-mass}
\begin{equation}
	E_T\geq 1.
\end{equation}
	More generally,
\begin{equation}
		E_T\geq\frac{D_T-1}{2}.
\end{equation}
\end{lemma}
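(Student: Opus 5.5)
The plan is to apply the qubit shadow inequality (Lemma~\ref{lem:qubit-shadow}) to the reduced state $\rho_T$, and then eliminate the odd-weight and full-weight Pauli masses using the defect identity of Lemma~\ref{lem:pauli-maximal-mixing-defect}. The only parity fact needed is that $s=|T|=2m+1$ is odd. Consequently, in $\sum_{j}(-1)^j a_j(T)$ the weight-$0$ term enters with sign $+$ and the full-weight term $a_s(T)$ enters with sign $-$. All intermediate weights $1\leq j\leq s-1$ split exactly into $E_T$ (even $j$, sign $+$) and $O_T$ (odd $j$, sign $-$).

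Concretely, I will first record $a_0(T)=r_I^2=(\operatorname{Tr}\rho_T)^2=1$. Lemma~\ref{lem:qubit-shadow} then reads
\[
1+E_T-O_T-a_s(T)\geq 0 .
\]
Second, Lemma~\ref{lem:pauli-maximal-mixing-defect} gives
\[
D_T=\sum_{j=1}^{s}a_j(T)=E_T+O_T+a_s(T),
\]
so that $O_T+a_s(T)=D_T-E_T$. Substituting this into the shadow inequality yields $1+E_T-(D_T-E_T)\geq 0$, that is,
\[
E_T\geq\frac{D_T-1}{2},
\]
which is the general statement.

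Finally, the specific bound $E_T\geq1$ follows by inserting the rank-induced estimate $D_T\geq3$ of Lemma~\ref{lem:rank-maximal-mixing-defect}, which applies because $|T^c|=2m-1$ on $4m$ qubits. This gives $E_T\geq(3-1)/2=1$.

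There is no real obstacle; the argument is a two-line combination of earlier lemmas. The one point requiring care is the sign bookkeeping. Because $s$ is odd, the top weight $s$ is excluded from $E_T$ by its definition ($j\leq s-1$) and enters the shadow sum negatively. It must therefore be grouped with $O_T$ rather than dropped. Grouping it this way is exactly what allows the full quantity $D_T$, and hence the rank bound, to be used.
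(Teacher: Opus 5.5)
Your proposal is correct and matches the paper's proof essentially step for step. The steps are the shadow inequality with $a_s(T)$ grouped into the negative sector, then the identity $D_T=E_T+O_T+a_s(T)$ yielding $E_T\geq (D_T-1)/2$, then the Schmidt-rank bound $D_T\geq 3$; the only cosmetic difference is that the paper re-derives $D_T\geq 3$ inline rather than citing Lemma~\ref{lem:rank-maximal-mixing-defect}.
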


\begin{proof}
	Since $s=2m+1$ is odd, the full-weight term $a_s(T)$ belongs to
	the odd-weight sector. Lemma~\ref{lem:qubit-shadow} therefore gives
	\[
	1+E_T-O_T-a_s(T)\geq 0.
	\]
	Equivalently,
	\[
	O_T+a_s(T)\leq 1+E_T.
	\]
	
	On the other hand,
	\[
	D_T
	=
	E_T+O_T+a_s(T).
	\]
	It follows that
	\[
	D_T
	\leq
	E_T+(1+E_T)
	=
	1+2E_T,
	\]
	and hence
	\[
	E_T\geq\frac{D_T-1}{2}.
	\]
	
	It remains to use the rank constraint induced by the global purity.
	Since
	\[
	|T^c|
	=
	4m-(2m+1)
	=
	2m-1
	=
	s-2,
	\]
	the Schmidt decomposition across the bipartition $T\,|\,T^c$ implies
	\[
	\operatorname{rank}(\rho_T)
	=
	\operatorname{rank}(\rho_{T^c})
	\leq
	2^{s-2}.
	\]
	For any density operator of rank at most $R$,
	\[
	\operatorname{Tr}(\rho^2)\geq\frac{1}{R}.
	\]
	Therefore,
	\[
	\operatorname{Tr}(\rho_T^2)
	\geq
	2^{-(s-2)}.
	\]
	Consequently,
	\[
	D_T
	=
	2^s\operatorname{Tr}(\rho_T^2)-1
	\geq
	2^s2^{-(s-2)}-1
	=
	3.
	\]
	Substituting this into the previous inequality gives
	\[
	E_T
	\geq
	\frac{3-1}{2}
	=
	1.
	\]
\end{proof}

\begin{remark}
	The conclusion $E_T\geq 1$ is not a consequence of the shadow
	inequality alone. It follows from combining two independent facts:
	\[
	\text{qubit shadow inequality}
	\quad\Longrightarrow\quad
	O_T+a_s(T)\leq 1+E_T,
	\]
	and
	\[
	\text{Schmidt-rank constraint}
	\quad\Longrightarrow\quad
	D_T=E_T+O_T+a_s(T)\geq 3.
	\]
	Together, these force at least one unit of Pauli mass to lie in the
	non-full even-weight sector.
\end{remark}

\subsection{Deleting one qubit}

\begin{lemma}[Deletion identity]
	\label{lem:deletion-identity}
	Let $\rho_T$ be a state on a set $T$ of $s$ qubits. Then
	\[
	\boxed{
		\sum_{i\in T}D_{T\setminus\{i\}}
		=
		\sum_{j=1}^{s-1}(s-j)a_j(T).
	}
	\]
\end{lemma}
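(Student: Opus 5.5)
The plan is a direct double-counting argument based on Lemma~\ref{lem:marginal-defect-pauli}. For each $i\in T$, the reduced state on $T\setminus\{i\}$ is a partial trace of $\rho_T$. Its Pauli coefficients are therefore exactly the coefficients $r_P$ of $\rho_T$ for those strings $P$ that act trivially on qubit $i$. This holds because the Pauli coefficients of any marginal are inherited from the global state, as in the proof of Lemma~\ref{lem:marginal-defect-pauli}. Applying that lemma with $K=T\setminus\{i\}$ inside the $s$-qubit system $T$ gives
\[
D_{T\setminus\{i\}}=\sum_{\substack{P\neq I\\ \operatorname{supp}(P)\subseteq T\setminus\{i\}}} r_P^2 .
\]

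Next we sum over $i\in T$ and exchange the order of summation. A fixed nonidentity string $P$ on $T$ with $\operatorname{wt}(P)=j$ contributes $r_P^2$ exactly once for each $i\notin\operatorname{supp}(P)$. There are $s-j$ such indices $i$. Hence
\[
\sum_{i\in T}D_{T\setminus\{i\}}=\sum_{P\neq I}(s-\operatorname{wt}(P))\,r_P^2 .
\]

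Grouping the strings by weight $j$ turns this into $\sum_{j=1}^{s}(s-j)a_j(T)$. The $j=s$ term vanishes because its coefficient is $s-s=0$; this is why the upper limit in the statement is $s-1$. The $j=0$ term never appears, since the identity string is excluded from each $D$.

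There is no real obstacle here; the proof is pure bookkeeping. The one point to check carefully is the normalization. Each $D_{T\setminus\{i\}}$ is defined with $d=2^{s-1}$. Lemma~\ref{lem:marginal-defect-pauli} guarantees that the coefficients $r_P=\operatorname{Tr}(\rho_T P)$ appear without any rescaling factor, because $\operatorname{Tr}(\rho_{T\setminus\{i\}}P')=\operatorname{Tr}(\rho_T (P'\otimes I_i))$. I would state this explicitly.
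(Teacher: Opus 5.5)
Your proposal is correct and follows essentially the same route as the paper: both use the Pauli-coefficient formula for $D_{T\setminus\{i\}}$ and double-count, noting that a weight-$j$ string is retained in exactly $s-j$ of the one-qubit deletions, so the full-weight term drops out. Your explicit check that $\operatorname{Tr}(\rho_{T\setminus\{i\}}P')=\operatorname{Tr}(\rho_T(P'\otimes I_i))$, so the coefficients carry no rescaling, is a harmless addition that the paper leaves implicit.
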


\begin{proof}
	Fix a nonidentity Pauli string $P$ supported on $T$, and suppose that
	\[
	\operatorname{wt}(P)=j.
	\]
	The coefficient $r_P^2$ contributes to
	$D_{T\setminus\{i\}}$ precisely when the deleted qubit $i$ does not
	belong to the support of $P$.
	
	Since $P$ acts nontrivially on $j$ positions, there are exactly
	\[
	s-j
	\]
	positions outside its support. Thus $r_P^2$ is counted exactly
	$s-j$ times in
	\[
	\sum_{i\in T}D_{T\setminus\{i\}}.
	\]
	Summing over all Pauli strings of weight $j$, and then over all
	$j=1,\ldots,s-1$, gives
	\[
	\sum_{i\in T}D_{T\setminus\{i\}}
	=
	\sum_{j=1}^{s-1}(s-j)a_j(T).
	\]
	The full-weight term $a_s(T)$ does not appear because a full-weight
	Pauli string is destroyed by deleting any qubit.
\end{proof}

\subsection{The general \texorpdfstring{$n\equiv 0 \pmod 4$}{n = 0 mod 4} result}

\begin{theorem}[Shadow-rank local bound]
	\label{thm:shadow-rank-local}
	Let $\ket{\psi}$ be a pure state on $4m$ qubits. Then for every
	subset
	\[
	T\subseteq[4m],
	\qquad
	|T|=2m+1,
	\]
	one has
	\[
	\boxed{
		\sum_{i\in T}D_{T\setminus\{i\}}\geq 1.
	}
	\]
	Consequently,
	\[
	\boxed{
		\max_{i\in T}D_{T\setminus\{i\}}
		\geq
		\frac{1}{2m+1}.
	}
	\]
\end{theorem}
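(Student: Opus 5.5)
The plan is to combine the deletion identity (Lemma~\ref{lem:deletion-identity}) with the even-weight mass bound (Lemma~\ref{lem:even-pauli-mass}). Everything here is a nonnegative sum of squared Pauli coefficients, so it suffices to keep a suitable subset of terms.

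First, with $s=|T|=2m+1$, Lemma~\ref{lem:deletion-identity} gives
\[
\sum_{i\in T}D_{T\setminus\{i\}}=\sum_{j=1}^{s-1}(s-j)\,a_j(T).
\]
Every summand on the right is nonnegative, since $a_j(T)\geq 0$ and $s-j\geq1$ for $1\leq j\leq s-1$. Discarding the odd-$j$ terms and using $s-j\geq 1$ for the remaining even $j$, we obtain
\[
\sum_{i\in T}D_{T\setminus\{i\}}\;\geq\;\sum_{\substack{1\leq j\leq s-1\\ j\ \mathrm{even}}}a_j(T)=E_T .
\]

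Next, Lemma~\ref{lem:even-pauli-mass} supplies $E_T\geq1$. That lemma in turn rests on the qubit shadow inequality (Lemma~\ref{lem:qubit-shadow}) and the Schmidt-rank bound $D_T\geq3$ (Lemma~\ref{lem:rank-maximal-mixing-defect}). Here it matters that $s$ is odd, so the full-weight term $a_s(T)$ falls in the odd sector and is excluded from $E_T$. This is exactly why $E_T$, and not some quantity containing $a_s(T)$, is what survives the deletion identity. Chaining the two inequalities gives $\sum_{i\in T}D_{T\setminus\{i\}}\geq1$.

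Finally, the sum has $|T|=2m+1$ terms, so averaging yields $\max_{i\in T}D_{T\setminus\{i\}}\geq 1/(2m+1)$.

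There is no real obstacle, since the hard work is already contained in Lemmas~\ref{lem:even-pauli-mass} and~\ref{lem:deletion-identity}. The one point to check carefully is that discarding the odd-weight terms and replacing the weights $s-j$ by $1$ only decreases the sum. This holds because all $a_j(T)$ are sums of squares of the real coefficients $r_P$.

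A sharper bound is available if desired. The smallest even-weight multiplier is $s-(s-1)=1$, attained at $j=2m$, while lower even weights carry larger multipliers. So the constant $1$ cannot be improved by this argument without extra information about where the even-weight mass sits.
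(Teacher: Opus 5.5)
Your proof is correct and follows the paper's own argument: apply the deletion identity, discard the odd-weight terms, bound each even-weight multiplier $s-j$ below by $1$ to get $\sum_{i\in T}D_{T\setminus\{i\}}\geq E_T$, then invoke Lemma~\ref{lem:even-pauli-mass} and average over the $2m+1$ terms. One small wording point: your closing paragraph opens with ``a sharper bound is available'' but then shows that this argument cannot improve the constant $1$, so that first phrase should be dropped.
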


\begin{proof}
	Set
	\[
	s=2m+1.
	\]
	By Lemma~\ref{lem:deletion-identity},
	\[
	\sum_{i\in T}D_{T\setminus\{i\}}
	=
	\sum_{j=1}^{s-1}(s-j)a_j(T).
	\]
	For every even $j$ with $1\leq j\leq s-1$, one has
	\[
	s-j\geq 1.
	\]
	Since all Pauli weight enumerators are nonnegative,
	\[
	\begin{aligned}
		\sum_{i\in T}D_{T\setminus\{i\}}
		&=
		\sum_{j=1}^{s-1}(s-j)a_j(T)\\
		&\geq
		\sum_{\substack{1\leq j\leq s-1\\
				j\ \mathrm{even}}}
		a_j(T)\\
		&=
		E_T.
	\end{aligned}
	\]
	Lemma~\ref{lem:even-pauli-mass} gives $E_T\geq 1$, and hence
	\[
	\sum_{i\in T}D_{T\setminus\{i\}}\geq 1.
	\]
	
	There are $|T|=2m+1$ nonnegative terms in the sum. Therefore, at
	least one of them is not smaller than their average:
	\[
	\max_{i\in T}D_{T\setminus\{i\}}
	\geq
	\frac{1}{2m+1}.
	\]
\end{proof}

\begin{remark}
	A slightly more general intermediate inequality is
	\[
	\sum_{i\in T}D_{T\setminus\{i\}}
	\geq
	\frac{D_T-1}{2}.
	\]
	For a $(2m+1)$-qubit subsystem of a pure $4m$-qubit state, the
	rank bound $D_T\geq 3$ reduces this to
	\[
	\sum_{i\in T}D_{T\setminus\{i\}}\geq 1.
	\]
\end{remark}

\subsection{The quantum Turán consequence}

For $\varepsilon\geq 0$, define the family of good $2m$-subsets by
\[
\mathcal{G}_{\varepsilon}(\psi)
:=
\left\{
A\in\binom{[4m]}{2m}
:
D_A\leq\varepsilon
\right\}.
\]
Also define
\[
Q_{\mathrm{ex},\varepsilon}^{D}(4m,2m)
:=
\max_{\ket{\psi}}
\left|
\mathcal{G}_{\varepsilon}(\psi)
\right|.
\]

\begin{corollary}[Quantum Turán bound for $4m$ qubits]
	\label{cor:quantum-turan-4m}
	For every
	\[
	0\leq\varepsilon<\frac{1}{2m+1},
	\]
	the hypergraph $\mathcal{G}_{\varepsilon}(\psi)$ is
	$K_{2m+1}^{(2m)}$-free. Consequently,
	\[
	\boxed{
		Q_{\mathrm{ex},\varepsilon}^{D}(4m,2m)
		\leq
		\operatorname{ex}_{2m}
		\left(
		4m,K_{2m+1}^{(2m)}
		\right).
	}
	\]
\end{corollary}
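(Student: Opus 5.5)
The argument is a direct contradiction built on Theorem~\ref{thm:shadow-rank-local}, followed by a passage to the extremal number by maximizing over states. Fix a pure state $\ket{\psi}$ on $4m$ qubits and some $0\leq\varepsilon<1/(2m+1)$. Suppose, for contradiction, that $\mathcal{G}_{\varepsilon}(\psi)$ contains a copy of $K_{2m+1}^{(2m)}$. Since $\mathcal{G}_{\varepsilon}(\psi)$ is a $2m$-uniform hypergraph on the vertex set $[4m]$, such a copy consists of a vertex set $T\subseteq[4m]$ with $|T|=2m+1$, together with all $2m$-subsets of $T$ as edges. These subsets are exactly the sets $T\setminus\{i\}$ for $i\in T$. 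Hence every $T\setminus\{i\}$ is $\varepsilon$-good, that is,
\[
D_{T\setminus\{i\}}\leq\varepsilon \qquad \text{for all } i\in T.
\]

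Summing over the $2m+1$ choices of $i$ gives
\[
\sum_{i\in T}D_{T\setminus\{i\}}\leq(2m+1)\varepsilon<1.
\]
This contradicts Theorem~\ref{thm:shadow-rank-local}, which gives $\sum_{i\in T}D_{T\setminus\{i\}}\geq 1$ for every $(2m+1)$-subset $T$. Equivalently, the bound $\max_{i\in T}D_{T\setminus\{i\}}\geq 1/(2m+1)>\varepsilon$ shows that some $T\setminus\{i\}$ fails to be $\varepsilon$-good. Therefore $\mathcal{G}_{\varepsilon}(\psi)$ is $K_{2m+1}^{(2m)}$-free.

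For the bound on the extremal number, recall that $\operatorname{ex}_{2m}(4m,K_{2m+1}^{(2m)})$ is, by definition, the maximum number of edges in a $K_{2m+1}^{(2m)}$-free $2m$-uniform hypergraph on $4m$ vertices. Since $\mathcal{G}_{\varepsilon}(\psi)$ is such a hypergraph, we get $|\mathcal{G}_{\varepsilon}(\psi)|\leq\operatorname{ex}_{2m}(4m,K_{2m+1}^{(2m)})$. This holds uniformly in $\ket{\psi}$, so taking the maximum over all pure states yields the boxed inequality.

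There is no real obstacle once Theorem~\ref{thm:shadow-rank-local} is available; the analytic work is entirely contained in that theorem. The only point needing care is the strictness of the threshold. The hypothesis $\varepsilon<1/(2m+1)$ is what makes $(2m+1)\varepsilon<1$ strict, which is exactly what the contradiction requires. At $\varepsilon=1/(2m+1)$ the argument breaks down, since equality $D_{T\setminus\{i\}}=1/(2m+1)$ for all $i$ is not excluded by the local bound.
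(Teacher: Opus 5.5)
Your proposal is correct and follows the paper's proof essentially verbatim. You assume a $(2m+1)$-set $T$ all of whose $2m$-subsets $T\setminus\{i\}$ are $\varepsilon$-good, sum the defects to get $(2m+1)\varepsilon<1$, contradict Theorem~\ref{thm:shadow-rank-local}, and then read off the Tur\'an bound from the definition; your remark on the strict threshold also matches the paper's own.
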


\begin{proof}
	Suppose, to the contrary, that there exists a set
	\[
	T\subseteq[4m],
	\qquad
	|T|=2m+1,
	\]
	such that all of its $2m$-subsets are good. Then
	\[
	D_{T\setminus\{i\}}\leq\varepsilon
	\qquad
	\text{for every }i\in T.
	\]
	Therefore,
	\[
	\sum_{i\in T}D_{T\setminus\{i\}}
	\leq
	(2m+1)\varepsilon
	<
	1,
	\]
	contradicting Theorem~\ref{thm:shadow-rank-local}. Thus
	$\mathcal{G}_{\varepsilon}(\psi)$ contains no copy of
	$K_{2m+1}^{(2m)}$.
	
	The claimed extremal bound then follows from the definition of the
	hypergraph Turán number.
\end{proof}

\begin{remark}[Strictness of the threshold]
	The condition
	\[
	\varepsilon<\frac{1}{2m+1}
	\]
	is strict. Theorem~\ref{thm:shadow-rank-local} establishes the existence of at least one $2m$-subset satisfying
	\[
	D_A\geq\frac{1}{2m+1}.
	\]
	At the endpoint
	\[
	\varepsilon=\frac{1}{2m+1},
	\]
	an edge with equality is still classified as good under the
	convention $D_A\leq\varepsilon$. Hence the theorem alone does not
	exclude a complete $K_{2m+1}^{(2m)}$ at the endpoint.
\end{remark}

\subsection{The eight-qubit case}

Specializing to
\[
m=2
\]
gives
\[
n=8,
\qquad
2m=4,
\qquad
2m+1=5.
\]

\begin{corollary}[Eight-qubit local bound]
	\label{cor:eight-qubit-local}
	Let $\ket{\psi}$ be any pure state on eight qubits. For every
	five-element set $T\subseteq[8]$,
	\[
	\boxed{
		\sum_{i\in T}D_{T\setminus\{i\}}\geq 1.
	}
	\]
	In particular,
	\[
	\boxed{
		\max_{i\in T}D_{T\setminus\{i\}}\geq\frac15.
	}
	\]
\end{corollary}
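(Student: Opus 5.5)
This corollary is the specialization $m=2$ of Theorem~\ref{thm:shadow-rank-local}, so the plan is simply to instantiate that theorem and check that its hypotheses hold. With $m=2$ we have $4m=8$ qubits and $|T|=2m+1=5$, so every five-element set $T\subseteq[8]$ is an admissible subset. Theorem~\ref{thm:shadow-rank-local} then gives $\sum_{i\in T}D_{T\setminus\{i\}}\geq1$ directly.

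For a self-contained sanity check, I would retrace the chain for $s=5$. First, $|T^c|=3$, so $\operatorname{rank}(\rho_T)\leq 8$. Lemma~\ref{lem:rank-maximal-mixing-defect} then gives $\operatorname{Tr}\rho_T^2\geq 1/8$, hence $D_T\geq 32/8-1=3$.

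Next, Lemma~\ref{lem:qubit-shadow} gives $1+a_2+a_4-a_1-a_3-a_5\geq0$. Combined with $D_T=a_1+\cdots+a_5\geq3$, this yields $E_T=a_2(T)+a_4(T)\geq1$, as in Lemma~\ref{lem:even-pauli-mass}.

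Finally, Lemma~\ref{lem:deletion-identity} with $s=5$ gives
\[
\sum_{i\in T}D_{T\setminus\{i\}}=4a_1+3a_2+2a_3+a_4\geq a_2+a_4\geq1 .
\]

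The second claim follows by averaging. The sum has five nonnegative terms and is at least $1$, so the largest term is at least $1/5$.

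There is no real obstacle here. The only point requiring care is confirming that the parity bookkeeping holds for $s=5$: the full-weight term $a_5$ lies in the odd sector, and the even weights $2,4$ have positive deletion multiplicities $3,1$. Both facts are immediate.
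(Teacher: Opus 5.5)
Your proposal is correct and follows the paper's approach: the paper proves the corollary by setting $m=2$ in Theorem~\ref{thm:shadow-rank-local}, and immediately afterwards gives the same explicit five-qubit derivation you use as a sanity check ($D_T\geq3$ from the rank bound, the shadow inequality forcing $a_2+a_4\geq1$, and $\sum_{i\in T}D_{T\setminus\{i\}}=4a_1+3a_2+2a_3+a_4\geq a_2+a_4$). Your averaging step for the bound $\max_{i\in T}D_{T\setminus\{i\}}\geq1/5$ is also identical to the paper's.
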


\begin{proof}
	The statement follows from Theorem~\ref{thm:shadow-rank-local} by setting
	$m=2$.
\end{proof}

An equivalent derivation follows directly from the five-qubit Pauli weight
distribution. In this case,
\[
D_T
=
a_1(T)+a_2(T)+a_3(T)+a_4(T)+a_5(T)
\geq 3,
\]
while the shadow inequality gives
\[
1-a_1(T)+a_2(T)-a_3(T)+a_4(T)-a_5(T)
\geq 0.
\]
Therefore,
\[
a_1(T)+a_3(T)+a_5(T)
\leq
1+a_2(T)+a_4(T),
\]
and hence
\[
a_2(T)+a_4(T)\geq 1.
\]
Moreover,
\[
\sum_{i\in T}D_{T\setminus\{i\}}
=
4a_1(T)+3a_2(T)+2a_3(T)+a_4(T),
\]
so
\[
\sum_{i\in T}D_{T\setminus\{i\}}
\geq
a_2(T)+a_4(T)
\geq 1.
\]

\begin{lemma}[The relevant eight-vertex Turán number]
	\label{lem:turan-eight}
	One has
	\[
	\boxed{
		\operatorname{ex}_4
		\left(
		8,K_5^{(4)}
		\right)
		=
		56.
	}
	\]
\end{lemma}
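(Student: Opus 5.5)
The plan is to prove the two inequalities separately. The upper bound comes from a double-counting argument on the complementary family. The lower bound comes from an explicit extremal family built from the Steiner quadruple system on eight points.

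\emph{Upper bound.} Let $\mathcal{G}\subseteq\binom{[8]}{4}$ be $K_5^{(4)}$-free, and let $\mathcal{F}=\binom{[8]}{4}\setminus\mathcal{G}$ be the family of missing $4$-sets. Being $K_5^{(4)}$-free means exactly that every $5$-set $S\subseteq[8]$ contains at least one member of $\mathcal{F}$. I will count incident pairs $(F,S)$ with $F\in\mathcal{F}$, $|S|=5$ and $F\subset S$.
\begin{itemize}
\item Each $4$-set lies in exactly $8-4=4$ five-sets, so the number of pairs is exactly $4|\mathcal{F}|$.
\item Each of the $\binom{8}{5}=56$ five-sets contributes at least one pair, so the number of pairs is at least $56$.
\end{itemize}
Hence $4|\mathcal{F}|\geq 56$, so $|\mathcal{F}|\geq14$. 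Therefore $|\mathcal{G}|\leq\binom{8}{4}-14=70-14=56$.

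\emph{Lower bound.} Equality in the count requires a family $\mathcal{F}$ of exactly $14$ four-sets such that every $5$-set contains exactly one member of $\mathcal{F}$. I will take the Steiner quadruple system $\mathrm{SQS}(8)$, realized as follows.
\begin{itemize}
\item Identify $[8]$ with $\mathbb{F}_2^3$.
\item Let $\mathcal{F}$ be the $14$ affine planes, that is, the $7$ two-dimensional subspaces together with their $7$ cosets.
\end{itemize}
The key property is that two distinct affine planes meet in either $0$ or $2$ points, never in $3$. Indeed, two distinct planes meet in the empty set or in an affine line, and an affine line in $\mathbb{F}_2^3$ has $2$ points.

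Now suppose a $5$-set contained two distinct blocks $F_1,F_2$. Then $|F_1\cup F_2|\leq5$, which forces $|F_1\cap F_2|\geq3$, a contradiction. So each $5$-set contains at most one block. Moreover, the incidence count is $14\cdot4=56=\binom{8}{5}$. Combined with "at most one", this shows every $5$-set contains exactly one block.

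Consequently the $56$ four-sets that are not affine planes form a $K_5^{(4)}$-free family. This gives $\operatorname{ex}_4(8,K_5^{(4)})\geq56$, and together with the upper bound, equality holds.

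The only step with real content is choosing a construction that attains the double-counting bound exactly. The affine-plane description of $\mathrm{SQS}(8)$ makes both the block count and the intersection property immediate, so I expect no genuine obstacle.
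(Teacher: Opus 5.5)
Your proof is correct and follows the paper's argument: you double-count incidences between missing $4$-sets and $5$-sets to get $|\mathcal{F}|\geq 14$, and you use the complement of the Steiner quadruple system $S(3,4,8)$ for the matching construction. The only difference is that you realize $S(3,4,8)$ explicitly as the $14$ affine planes of $\mathbb{F}_2^3$ and check directly that two planes meet in $0$ or $2$ points, whereas the paper assumes the Steiner system exists and derives the same intersection bound from its defining triple property.
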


\begin{proof}
	There are
	\[
	\binom84=70
	\]
	four-element subsets of $[8]$ and
	\[
	\binom85=56
	\]
	five-element subsets.
	
	Let $\mathcal{H}$ be a $K_5^{(4)}$-free $4$-uniform hypergraph on
	eight vertices, and let
	\[
	\mathcal{M}
	=
	\binom{[8]}4\setminus\mathcal{H}
	\]
	be its family of missing edges. Every five-element set must contain
	at least one edge of $\mathcal{M}$.
	
	Each missing four-edge is contained in exactly four five-element
	sets. Double-counting pairs
	\[
	(B,T),
	\qquad
	B\in\mathcal{M},
	\quad
	B\subseteq T,
	\quad
	|T|=5,
	\]
	gives
	\[
	4|\mathcal{M}|\geq\binom85=56.
	\]
	Thus
	\[
	|\mathcal{M}|\geq14,
	\]
	and hence
	\[
	|\mathcal{H}|
	\leq
	70-14
	=
	56.
	\]
	
	To attain equality, take the $14$ blocks of the Steiner quadruple
	system $S(3,4,8)$ as the missing edges. Two distinct blocks cannot
	both lie inside the same five-element set, since they would then
	intersect in at least three vertices, contradicting the defining
	property that every triple lies in exactly one block.
	
	Each of the $14$ blocks is contained in four five-element sets, so
	the total number of block--five-set incidences is
	\[
	14\cdot4=56.
	\]
	Since there are exactly $56$ five-element sets and each contains at
	most one block, every five-element set contains exactly one missing
	block. Hence the complement of $S(3,4,8)$ is
	$K_5^{(4)}$-free and has $56$ edges.
\end{proof}

\begin{corollary}[Eight-qubit stability plateau]
	\label{cor:eight-qubit-plateau}
	The exact eight-qubit construction of Ref.~\cite{Zhang2025Extremal} has $56$
	maximally mixed four-qubit marginals. Consequently,
	\[
	\boxed{
		Q_{\mathrm{ex},\varepsilon}^{D}(8,4)
		=
		56,
		\qquad
		0\leq\varepsilon<\frac15.
	}
	\]
\end{corollary}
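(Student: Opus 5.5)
The plan is a matching upper and lower bound.

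\emph{Upper bound.} For $0\leq\varepsilon<1/5$, apply Corollary~\ref{cor:quantum-turan-4m} with $m=2$. Since $1/(2m+1)=1/5$, the hypothesis $\varepsilon<1/5$ is exactly what that corollary needs. It shows that for every pure eight-qubit state $\ket{\psi}$, the family $\mathcal{G}_{\varepsilon}(\psi)$ of $\varepsilon$-good four-subsets is $K_5^{(4)}$-free. The logical core is Corollary~\ref{cor:eight-qubit-local}: if all five four-subsets of some five-set $T$ were good, their defects would sum to at most $5\varepsilon<1$, a contradiction. Lemma~\ref{lem:turan-eight} then gives $|\mathcal{G}_{\varepsilon}(\psi)|\leq 56$ for every $\ket{\psi}$. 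Taking the maximum over states yields $Q_{\mathrm{ex},\varepsilon}^{D}(8,4)\leq 56$.

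\emph{Lower bound.} The defect is nonnegative and vanishes exactly for maximally mixed marginals. Hence any four-subset $A$ with $\rho_A=I_A/16$ has $D_A=0\leq\varepsilon$ for every $\varepsilon\geq 0$. This means $\mathcal{G}_{0}(\psi)\subseteq\mathcal{G}_{\varepsilon}(\psi)$, so $Q_{\mathrm{ex},\varepsilon}^{D}(8,4)$ is nondecreasing in $\varepsilon$ and bounded below by $Q_{\mathrm{ex},0}^{D}(8,4)=\Qex(8,4)$. Now invoke the exact construction of Ref.~\cite{Zhang2025Extremal}, which is a pure eight-qubit state with $56$ maximally mixed four-qubit marginals. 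This gives $Q_{\mathrm{ex},\varepsilon}^{D}(8,4)\geq 56$.

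\emph{Combining.} The two bounds give equality on the whole interval $[0,1/5)$.

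No step is genuinely hard, because all the substance sits in Theorem~\ref{thm:shadow-rank-local} and Lemma~\ref{lem:turan-eight}. The only points needing care are:
\begin{enumerate}[(i)]
\item The threshold is strict. The argument needs $5\varepsilon<1$, which is where $\varepsilon<1/5$ enters, and as the earlier remark notes it does not extend to the endpoint.
\item The lower bound relies on the cited exact construction. We are importing it, not re-deriving it.
\end{enumerate}
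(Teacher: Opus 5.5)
Your proposal is correct and follows the paper's proof. The upper bound comes from Corollary~\ref{cor:quantum-turan-4m} at $m=2$ combined with Lemma~\ref{lem:turan-eight}. The lower bound comes from the $56$ maximally mixed marginals of the cited construction, each with $D_A=0\leq\varepsilon$.
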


\begin{proof}
	For
	\[
	0\leq\varepsilon<\frac15,
	\]
	Corollary~\ref{cor:quantum-turan-4m} and
	Lemma~\ref{lem:turan-eight} give
	\[
	Q_{\mathrm{ex},\varepsilon}^{D}(8,4)
	\leq
	\operatorname{ex}_4
	\left(
	8,K_5^{(4)}
	\right)
	=
	56.
	\]
	
	On the other hand, the exact construction of Ref.~\cite{Zhang2025Extremal} has $56$ four-qubit
	marginals satisfying
	\[
	\rho_A=\frac{I_A}{16},
	\]
	and therefore
	\[
	D_A=0.
	\]
	These $56$ marginals are good for every $\varepsilon\geq0$, so
	\[
	Q_{\mathrm{ex},\varepsilon}^{D}(8,4)\geq56.
	\]
	Combining the upper and lower bounds yields
	\[
	Q_{\mathrm{ex},\varepsilon}^{D}(8,4)=56
	\]
	for every
	\[
	0\leq\varepsilon<\frac15.
	\]
\end{proof}

\section{Odd-Qubit Systems and the Forbidden Hypergraph \texorpdfstring{$H_k$}{Hk}}
\label{sec:odd-qubit-stability}

In this section, we consider pure states on an odd number of qubits,
\[
n=2k+1,
\qquad
k\geq 3,
\]
and study their $k$-qubit reduced density operators. Unlike the
$4m$-qubit case, the relevant forbidden hypergraph is not a complete
hypergraph. Instead, it consists of two different classes of
$k$-subsets associated with a fixed $(k+2)$-subset.

Throughout this section, for every subsystem $S$, we write
\[
\rho_S
=
\operatorname{Tr}_{S^c}
\bigl(
|\psi\rangle\langle\psi|
\bigr)
\]
and define the marginal maximal-mixing defect
\[
D_S
:=
2^{|S|}
\operatorname{Tr}(\rho_S^2)-1.
\]
Thus,
\[
D_S\geq 0,
\]
with equality if and only if
\[
\rho_S=\frac{I_S}{2^{|S|}}.
\]

For $\varepsilon\geq 0$, define the family of $\varepsilon$-good
$k$-subsets by
\[
\mathcal{G}_{\varepsilon}(\psi)
:=
\left\{
B\in\binom{[2k+1]}{k}
:
D_B\leq\varepsilon
\right\}.
\]
The corresponding approximate quantum extremal number is
\[
Q_{\mathrm{ex},\varepsilon}^{D}(2k+1,k)
:=
\max_{|\psi\rangle}
\left|
\mathcal{G}_{\varepsilon}(\psi)
\right|.
\]

\subsection{The forbidden hypergraph}

Fix a subset
\[
A\subseteq[2k+1],
\qquad
|A|=k+2,
\]
and denote its complement by
\[
C:=A^c.
\]
Then
\[
|C|=k-1.
\]

\begin{definition}[The hypergraph $H_k(A)$]
	\label{def:odd-hypergraph}
	The $k$-uniform hypergraph $H_k(A)$ is defined on the vertex set
	$[2k+1]$ with edge set
	\[
	E\bigl(H_k(A)\bigr)
	:=
	\left\{
	B\in\binom{[2k+1]}{k}
	:
	|B\cap A|=k
	\ \text{or}\
	|B\cap A|=1
	\right\}.
	\]
	Equivalently,
	\[
	E\bigl(H_k(A)\bigr)
	=
	\binom{A}{k}
	\cup
	\left\{
	C\cup\{j\}:j\in A
	\right\}.
	\]
\end{definition}

The first class contains all $k$-subsets lying inside $A$, while the
second class consists of the $k$-subsets
\[
B_j:=C\cup\{j\},
\qquad
j\in A.
\]
The total number of edges is
\[
\left|E\bigl(H_k(A)\bigr)\right|
=
\binom{k+2}{k}+(k+2)
=
\binom{k+2}{2}+k+2.
\]

For later use, define
\[
A_j:=A\setminus\{j\}.
\]
Then
\[
|A_j|=k+1
\]
and
\[
A_j^c=B_j.
\]

\subsection{Consequences of global purity}

The following two identities follow from global purity and the Pauli
representation.

\begin{lemma}[Complementary maximal-mixing-defect identity]
	\label{lem:complement-defect-odd}
	Let $|\psi\rangle$ be a pure state and let $S$ be any subsystem.
	Then
	\[
	\boxed{
		D_S+1
		=
		2^{|S|-|S^c|}
		\bigl(D_{S^c}+1\bigr).
	}
	\]
\end{lemma}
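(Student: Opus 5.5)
The plan is to reduce the identity to the equality of purities of complementary marginals of a pure state, and then substitute the definition of the defect on each side.

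\emph{Step 1 (equal purities).} Take the Schmidt decomposition of $|\psi\rangle$ across the bipartition $S\,|\,S^c$, namely $|\psi\rangle=\sum_\ell \sqrt{\lambda_\ell}\,|u_\ell\rangle_S|v_\ell\rangle_{S^c}$. This gives $\rho_S=\sum_\ell\lambda_\ell|u_\ell\rangle\langle u_\ell|$ and $\rho_{S^c}=\sum_\ell\lambda_\ell|v_\ell\rangle\langle v_\ell|$. The two marginals therefore share the same nonzero spectrum, the fact already used in the proof of Lemma~\ref{lem:rank-maximal-mixing-defect}. Consequently
\[
\operatorname{Tr}(\rho_S^2)=\sum_\ell\lambda_\ell^2=\operatorname{Tr}(\rho_{S^c}^2).
\]

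\emph{Step 2 (substitute the defect).} Rewrite the definition of the defect as $D_S+1=2^{|S|}\operatorname{Tr}(\rho_S^2)$, and likewise $D_{S^c}+1=2^{|S^c|}\operatorname{Tr}(\rho_{S^c}^2)$. Using the common purity from Step 1,
\[
D_S+1=2^{|S|}\operatorname{Tr}(\rho_{S^c}^2)=2^{|S|-|S^c|}\,2^{|S^c|}\operatorname{Tr}(\rho_{S^c}^2)=2^{|S|-|S^c|}\bigl(D_{S^c}+1\bigr),
\]
which is the claimed identity.

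\emph{Degenerate cases.} If $S=\emptyset$ or $S=[n]$, the conventions $\operatorname{Tr}\rho_\emptyset^2=1$ and $\operatorname{Tr}\rho_{[n]}^2=1$ give $D_\emptyset=0$ and $D_{[n]}=2^n-1$. The identity then reads $1=2^{-n}\cdot 2^n$, which holds. I therefore plan to note these cases briefly in a remark rather than treat them in the main argument.

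\emph{Possible alternative route.} Since the preceding text says the identity follows ``from global purity and the Pauli representation,'' one could instead argue through Lemma~\ref{lem:marginal-defect-pauli}, via a Pauli-sum version of the purity equality. That would require the quantum MacWilliams relation, which is unnecessary here. There is no real obstacle: the only substantive input is the Schmidt-decomposition spectral equality, and the rest is bookkeeping with the powers of two.
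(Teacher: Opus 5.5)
Your proof is correct and follows essentially the same route as the paper: the Schmidt decomposition gives $\rho_S$ and $\rho_{S^c}$ the same nonzero spectrum, hence equal purities, and the identity then follows by factoring out $2^{|S|-|S^c|}$. Your check of the degenerate cases $S=\emptyset$ and $S=[n]$ is correct but not needed; the paper does not treat them separately.
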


\begin{proof}
	Since the global state is pure, $\rho_S$ and $\rho_{S^c}$ have the
	same nonzero eigenvalues. Hence
	\[
	\operatorname{Tr}(\rho_S^2)
	=
	\operatorname{Tr}(\rho_{S^c}^2).
	\]
	Therefore,
	\[
	\begin{aligned}
		D_S+1
		&=
		2^{|S|}\operatorname{Tr}(\rho_S^2)\\
		&=
		2^{|S|-|S^c|}
		2^{|S^c|}\operatorname{Tr}(\rho_{S^c}^2)\\
		&=
		2^{|S|-|S^c|}
		\bigl(D_{S^c}+1\bigr).
	\end{aligned}
	\]
\end{proof}

\begin{lemma}[Monotonicity of the maximal-mixing defect]
	\label{lem:defect-monotonicity-odd}
	If $R\subseteq S$, then
	\[
	D_R\leq D_S.
	\]
\end{lemma}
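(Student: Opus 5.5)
The plan is to use the Pauli-coefficient formula of Lemma~\ref{lem:marginal-defect-pauli}, which writes each defect as a sum of squared Pauli coefficients of the global state restricted by support. Fix the global state $\rho=\ket{\psi}\!\bra{\psi}$ on $[n]$ with coefficients $r_\alpha=\operatorname{Tr}(\rho\sigma_\alpha)$. By Lemma~\ref{lem:marginal-defect-pauli},
\[
D_R=\sum_{\alpha\neq0:\,\operatorname{supp}(\sigma_\alpha)\subseteq R}r_\alpha^2,
\qquad
D_S=\sum_{\alpha\neq0:\,\operatorname{supp}(\sigma_\alpha)\subseteq S}r_\alpha^2 .
\]
The coefficients $r_\alpha$ are computed from the same global state in both expressions. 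This matters because the marginals $\rho_R$ and $\rho_S$ are consistent: the Pauli coefficients of $\rho_R$ are exactly those of $\rho_S$ whose support lies in $R$. I would note this explicitly, citing the partial-trace computation in the proof of Lemma~\ref{lem:marginal-defect-pauli}.

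The key step is the support-inclusion argument. If $\operatorname{supp}(\sigma_\alpha)\subseteq R$ and $R\subseteq S$, then $\operatorname{supp}(\sigma_\alpha)\subseteq S$. Hence the index set in the sum for $D_R$ is contained in the index set for $D_S$. Every summand $r_\alpha^2$ is real, since Pauli strings are Hermitian and so $r_\alpha\in\mathbb{R}$, and therefore nonnegative. It follows that
\[
D_S-D_R=\sum_{\alpha:\,\operatorname{supp}(\sigma_\alpha)\subseteq S,\ \operatorname{supp}(\sigma_\alpha)\not\subseteq R}r_\alpha^2\geq0,
\]
which is the claim. Purity of the global state is not used, so the lemma holds for arbitrary states.

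I do not expect a real obstacle; the only care needed is the realness of $r_\alpha$ and the consistency of coefficients across different marginals. As a cross-check I could give an alternative route: if $d_R=2^{|R|}$, then $d_S\operatorname{Tr}\rho_S^2\geq d_R\operatorname{Tr}\rho_R^2$, because partial trace over a $d$-dimensional factor satisfies $\operatorname{Tr}(\rho_R^2)\leq d\operatorname{Tr}(\rho_S^2)$. That inequality, however, itself is most easily proved via the Pauli expansion above, so the Pauli argument is the primary plan.
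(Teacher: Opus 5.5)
Your proof is correct and is essentially the paper's argument. The paper likewise writes $D_S$ as the sum of squared nonidentity Pauli coefficients supported in $S$, notes that $D_R$ is the sub-sum over supports contained in $R$, and concludes by nonnegativity; your additional remarks (realness of $r_\alpha$, consistency of coefficients across marginals, and that global purity is not needed) are accurate.
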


\begin{proof}
	Expand $\rho_S$ in the Pauli basis. The quantity $D_S$ is the sum of
	the squares of all nonidentity Pauli coefficients whose supports are
	contained in $S$. The quantity $D_R$ consists of the subset of these terms whose supports are contained in $R$. Since $R\subseteq S$ and all summands
	are nonnegative, one has
	\[
	D_R\leq D_S.
	\]
\end{proof}

\subsection{The exact forbidden-pattern theorem}

The zero-error case is described by the following theorem.

\begin{theorem}[Exact odd-system forbidden pattern]
	\label{thm:odd-exact-forbidden}
	Let $|\psi\rangle$ be a pure state on $2k+1$ qubits, where
	\[
	k\geq 3,
	\qquad
	k\neq 5.
	\]
	Then, for every subset $A\subseteq[2k+1]$ of size $k+2$, at least one
	edge of $H_k(A)$ has a non-maximally mixed reduction.
	
	Equivalently, the hypergraph
	\[
	\mathcal{G}_0(\psi)
	=
	\left\{
	B\in\binom{[2k+1]}k:D_B=0
	\right\}
	\]
	is $H_k$-free.
\end{theorem}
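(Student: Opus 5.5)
The plan is to argue by contradiction. Fix $A$ with $|A|=k+2$ and complement $C$ with $|C|=k-1$. Suppose every edge of $H_k(A)$ is maximally mixed, so that $D_B=0$ for every $B\in\binom{A}{k}$ and for every $B_j=C\cup\{j\}$. From these hypotheses I will compute the full Pauli weight distribution $a_0(A),\dots,a_{k+2}(A)$ of $\rho_A$ exactly. I will then test it against nonnegativity and Lemma~\ref{lem:qubit-shadow}.

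\emph{Step 1 (low weights vanish).} Every $k$-subset of $A$ is maximally mixed. By Lemma~\ref{lem:marginal-defect-pauli}, every nonidentity Pauli string supported on $A$ with weight at most $k$ has $r_P=0$. Hence $a_j(A)=0$ for $1\le j\le k$, and only $a_{k+1}(A)$ and $a_{k+2}(A)$ survive.

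\emph{Step 2 (fix the weight-$(k+1)$ mass).} Since $A_j^c=B_j$, Lemma~\ref{lem:complement-defect-odd} gives $D_{A_j}+1=2^{(k+1)-k}(D_{B_j}+1)=2$, so $D_{A_j}=1$. By Step 1, $D_{A_j}$ equals the sum of $r_P^2$ over the weight-$(k+1)$ strings supported in $A_j$. Every weight-$(k+1)$ string on $A$ misses exactly one qubit $j$, so it is counted in exactly one $D_{A_j}$. Summing over $j\in A$ therefore gives $a_{k+1}(A)=k+2$.

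\emph{Step 3 (fix the total defect).} We have $C\subseteq B_j$, so Lemma~\ref{lem:defect-monotonicity-odd} gives $D_C\le D_{B_j}=0$. Lemma~\ref{lem:complement-defect-odd} then gives $D_A+1=2^{3}(D_C+1)=8$, so $D_A=7$. Consequently
\[
a_{k+2}(A)=7-(k+2)=5-k .
\]

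\emph{Step 4 (contradictions).} For odd $k\ge7$ the value $a_{k+2}(A)$ is negative, which is impossible. For even $k\ge4$, Lemma~\ref{lem:qubit-shadow} requires
\[
1-(k+2)+(5-k)=4-2k\ge0,
\]
which fails. This disposes of every case except $k=3$; the excluded case $k=5$ is exactly where both tests pass, since there $a_7=0$ and the shadow sum is $8\ge 0$.

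\emph{Step 5 (the case $k=3$, the main obstacle).} Here the data $a_4=5$, $a_5=2$ pass the zeroth shadow test, since $1+5-2\ge0$. So a genuinely different argument is needed. The approach I would try first is as follows. The rank of $\rho_A$ is at most $2^{|C|}=4$, and $\operatorname{Tr}\rho_A^2=8/32=1/4$, so by the equality case of the rank--purity bound $\rho_A=\Pi/4$ for a rank-$4$ projector $\Pi$ on five qubits. Next I would show that $\Pi$ is the projector of a pure $((5,4,3))$ code. Since $\rho_C=I/4$, the state $\psi$ is a maximally entangled purification of $\Pi/4$ with reference $C$. The conditions that $C\cup\{j\}$ is maximally mixed for every $j$ are the standard reformulation of the Knill--Laflamme conditions for all single-qubit errors with $\Pi E\Pi=\tfrac{1}{2}\operatorname{Tr}(E)\,\Pi/2=0$ for nonidentity $E$, and the vanishing of the weight-$2$ terms in Step 1 extends this to all errors of weight at most $2$, so the code is pure of distance $3$. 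The quantum Hamming bound for pure codes, $4\cdot(1+3\cdot5)=64>32$, then yields the contradiction. If that translation proves delicate, my fallback is the higher Rains shadow inequalities or the quantum MacWilliams identity applied to the projector $\Pi$ with enumerator $(1,0,0,0,5,2)$, aiming to find a negative shadow coefficient.
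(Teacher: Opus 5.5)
Your Steps 1--4 are correct. They give a scalar proof for every $k\geq4$, $k\neq5$, which is in effect the $\varepsilon=0$ case of Theorems~\ref{thm:odd-general-explicit} and~\ref{thm:odd-even-k-explicit}. The paper instead treats all $k$ at once with operator identities rather than weight masses.

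\textbf{The gap is the case $k=3$}, which the statement includes. In Step~5 you say the vanishing weight-$2$ coefficients extend the Knill--Laflamme conditions to weight-$2$ errors. That inference is invalid. Step~1 only gives $\operatorname{Tr}(\Pi E)=0$, whereas detecting $E$ requires $\Pi E\Pi=0$, i.e.\ $\bra{\psi}E\otimes F\ket{\psi}=0$ for every Pauli $F$ on $C$.

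The hypotheses actually give the opposite. For $j\neq j'$ in $A$, the complement of $C\cup\{j,j'\}$ is the maximally mixed $3$-set $A\setminus\{j,j'\}$, so Lemma~\ref{lem:complement-defect-odd} gives $D_{C\cup\{j,j'\}}=1$. Since $C\cup\{j\}$ and $C\cup\{j'\}$ are maximally mixed, this whole unit sits on strings $E\otimes F$ with $\operatorname{supp}E=\{j,j'\}$. Hence some weight-$2$ $E$ has $\Pi E\Pi\neq0$ while $\operatorname{Tr}(\Pi E)=0$. So $\Pi$ is only a pure $((5,4,2))$ code, and the Hamming bound does not apply.

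Your fallback does not rescue this. With $K=4$, the enumerator $(1,0,0,0,5,2)$ has MacWilliams dual $(1,0,10,40,45,32)$; the $10$ is exactly the ten pairs above. Its shadow enumerator is $(\tfrac12,\tfrac52,5,45,\tfrac{85}{2},\tfrac{65}{2})$. All entries are nonnegative and $B_j\geq A_j$, so neither MacWilliams nor shadow positivity for $\Pi$ gives a contradiction.

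\textbf{What is missing is operator-level information} that the weight distribution throws away. Write $32\rho_A=I+U+V$, with $U$ the weight-$4$ part and $V$ the weight-$5$ part.
\begin{enumerate}[(i)]
\item Via the Schmidt decomposition, $\rho_{B_j}=I/8$ gives $U_j\ket{\psi}=\ket{\psi}$, hence $U\ket{\psi}=5\ket{\psi}$.
\item Likewise $\rho_C=I/4$ gives $(I+U+V)\ket{\psi}=8\ket{\psi}$, hence $V\ket{\psi}=2\ket{\psi}$.
\item The flatness $\rho_A=\Pi/4$ that you already derived reads $(I+U+V)^2=8(I+U+V)$.
\item The spin-flip map $\Theta_5$ from the proof of Lemma~\ref{lem:qubit-shadow} is an anti-automorphism with $\Theta_5(U)=U$ and $\Theta_5(V)=-V$. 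It therefore fixes $U^2$ and $V^2$ and negates $\{U,V\}$.
\item Taking the $\Theta_5$-odd part of the projector equation gives $2V+\{U,V\}=8V$, i.e.\ $\{U,V\}=6V$.
\item Applied to $\ket{\psi}$, this reads $20\ket{\psi}=12\ket{\psi}$, a contradiction.
\end{enumerate}
This is the paper's argument. For general $k$ it yields $(k-1)(5-k)=0$ when $k$ is odd and $k=2$ when $k$ is even, which is why only $k=5$ escapes.
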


\begin{proof}
	Assume, for contradiction, that every edge of $H_k(A)$ is maximally
	mixed. Thus,
	\[
	D_B=0
	\qquad
	\text{for every }
	B\in E\bigl(H_k(A)\bigr).
	\]
	
	Set
	\[
	s:=k+2,
	\qquad
	q:=2^s.
	\]
	Since every $k$-subset contained in $A$ is maximally mixed, every
	nonidentity Pauli coefficient of $\rho_A$ of weight at most $k$
	vanishes. Hence
	\[
	q\rho_A=I+U+V,
	\]
	where $U$ is the weight-$(k+1)$ part and $V$ is the full
	weight-$(k+2)$ part.
	
	Decompose
	\[
	U=\sum_{j\in A}U_j,
	\]
	where $U_j$ is supported exactly on
	\[
	A_j=A\setminus\{j\}.
	\]
	
	For each $j\in A$, the complement of $A_j$ is
	\[
	A_j^c=B_j=C\cup\{j\}.
	\]
	By assumption, $\rho_{B_j}$ is maximally mixed:
	\[
	\rho_{B_j}=\frac{I_{B_j}}{2^k}.
	\]
	The Schmidt decomposition across the bipartition
	\[
	A_j\,|\,B_j
	\]
	therefore gives
	\[
	\bigl(\rho_{A_j}\otimes I_{B_j}\bigr)|\psi\rangle
	=
	2^{-k}|\psi\rangle.
	\]
	On the other hand, tracing the Pauli expansion of $\rho_A$ over the
	qubit $j$ gives
	\[
	2^{k+1}\rho_{A_j}=I+U_j.
	\]
	Consequently,
	\[
	(I+U_j)|\psi\rangle=2|\psi\rangle,
	\]
	and hence
	\[
	\boxed{
		U_j|\psi\rangle=|\psi\rangle.
	}
	\]
	Summing over $j\in A$ yields
	\[
	\boxed{
		U|\psi\rangle=(k+2)|\psi\rangle.
	}
	\]
	
	Since $C\subset B_j$ and $\rho_{B_j}$ is maximally mixed, $\rho_C$
	is also maximally mixed:
	\[
	\rho_C=\frac{I_C}{2^{k-1}}.
	\]
	The Schmidt decomposition across
	\[
	A\,|\,C
	\]
	therefore gives
	\[
	\bigl(\rho_A\otimes I_C\bigr)|\psi\rangle
	=
	2^{1-k}|\psi\rangle.
	\]
	Multiplying by $q=2^{k+2}$ gives
	\[
	(I+U+V)|\psi\rangle=8|\psi\rangle.
	\]
	Using $U|\psi\rangle=(k+2)|\psi\rangle$, we obtain
	\[
	\boxed{
		V|\psi\rangle=(5-k)|\psi\rangle.
	}
	\]
	
	Moreover, because $\rho_A$ has all of its nonzero eigenvalues equal
	to $2^{1-k}$,
	\[
	\rho_A^2=2^{1-k}\rho_A.
	\]
	Equivalently,
	\[
	\boxed{
		(I+U+V)^2=8(I+U+V).
	}
	\label{eq:odd-projector-equation}
	\]
	
	Projecting this equation onto the odd-weight Pauli sector yields two cases.
	If $k$ is odd, Then $U$ has even Pauli weight and
	$V$ has odd Pauli weight. The odd-weight part of
	\eqref{eq:odd-projector-equation} is
	\[
	2V+\{U,V\}=8V.
	\]
	Thus,
	\[
	\boxed{
		\{U,V\}=6V.
	}
	\]
	Applying this identity to $|\psi\rangle$, we find
	\[
	\begin{aligned}
		\{U,V\}|\psi\rangle
		&=
		UV|\psi\rangle+VU|\psi\rangle\\
		&=
		2(k+2)(5-k)|\psi\rangle,
	\end{aligned}
	\]
	whereas
	\[
	6V|\psi\rangle
	=
	6(5-k)|\psi\rangle.
	\]
	Therefore,
	\[
	2(k+2)(5-k)=6(5-k),
	\]
	or equivalently,
	\[
	2(k-1)(5-k)=0.
	\]
	For $k\geq3$, this equation requires
	\[
	k=5.
	\]
	Hence the assumed configuration is inconsistent for every odd $k\geq3$
	with $k\neq5$.
	
	If $k$ is even, Then $U$ has odd Pauli weight and
	$V$ has even Pauli weight. The odd-weight part of
	\eqref{eq:odd-projector-equation} is
	\[
	2U+\{U,V\}=8U,
	\]
	and hence
	\[
	\boxed{
		\{U,V\}=6U.
	}
	\]
	Applying this identity to $|\psi\rangle$ gives
	\[
	2(k+2)(5-k)|\psi\rangle
	=
	6(k+2)|\psi\rangle.
	\]
	Since $k+2>0$, this implies
	\[
	5-k=3,
	\]
	and therefore
	\[
	k=2.
	\]
	This contradicts the assumption $k\geq3$.
	
	Hence, for every $k\geq3$ with $k\neq5$, the edge set
	$E(H_k(A))$ cannot consist entirely of maximally mixed reductions.
\end{proof}

\noindent\textbf{Coefficient check.}
The odd-weight projection of
\[
(I+U+V)^2=8(I+U+V)
\]
contains the linear term $2V$ or $2U$. Consequently,
\[
2V+\{U,V\}=8V
\]
implies
\[
\{U,V\}=6V,
\]
and similarly
\[
2U+\{U,V\}=8U
\]
implies
\[
\{U,V\}=6U.
\]
The coefficient is therefore $6$.

\subsection{A compactness-based positive stability radius}

The exact forbidden-pattern theorem, together with compactness, yields a
positive finite-error stability radius, although the argument does not provide
an explicit value.

\begin{definition}[Sharp local $H_k$-stability radius]
	\label{def:odd-sharp-radius}
	For $k\geq3$, define
	\[
	\varepsilon_{H_k}^{\star}
	:=
	\min_{\substack{|\psi\rangle\\
			A\in\binom{[2k+1]}{k+2}}}
	\;
	\max_{B\in E(H_k(A))}
	D_B.
	\]
\end{definition}

\begin{corollary}[Positive compactness gap]
	\label{cor:odd-compactness-gap}
	For every
	\[
	k\geq3,
	\qquad
	k\neq5,
	\]
	one has
	\[
	\boxed{
		\varepsilon_{H_k}^{\star}>0.
	}
	\]
	Consequently, if
	\[
	0\leq\varepsilon<\varepsilon_{H_k}^{\star},
	\]
	then $\mathcal{G}_{\varepsilon}(\psi)$ is $H_k$-free.
\end{corollary}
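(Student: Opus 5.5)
The plan is a standard compactness argument, with Theorem~\ref{thm:odd-exact-forbidden} supplying the strict positivity. First I will check that the minimum in Definition~\ref{def:odd-sharp-radius} exists.

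The set of pure states on $2k+1$ qubits is compact: it is the unit sphere in $\mathbb{C}^{2^{2k+1}}$, or equivalently its image in projective space. For each subsystem $B$, the map $|\psi\rangle\mapsto\rho_B$ is continuous, since it is a partial trace of $|\psi\rangle\langle\psi|$. Hence $|\psi\rangle\mapsto D_B=2^{|B|}\operatorname{Tr}(\rho_B^2)-1$ is continuous, being polynomial in the amplitudes.

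For fixed $A$, the function $F_A(\psi):=\max_{B\in E(H_k(A))}D_B$ is a maximum of finitely many continuous functions, so it is continuous. Since $\binom{[2k+1]}{k+2}$ is finite, $F(\psi):=\min_A F_A(\psi)$ is also continuous. By Weierstrass' theorem the minimum of $F$ over the compact state space is attained at some $|\psi_0\rangle$ and some $A_0$. Alternatively, one may observe that all $A$ are equivalent under qubit permutations, which act on the state space.

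Next I will argue that the attained minimum is strictly positive. Each $D_B\geq0$, so $\varepsilon_{H_k}^{\star}\geq0$. Suppose $\varepsilon_{H_k}^{\star}=0$. Then $D_B(\psi_0)=0$ for every edge $B$ of $H_k(A_0)$, so every edge of $H_k(A_0)$ has a maximally mixed reduction. Theorem~\ref{thm:odd-exact-forbidden} rules this out for $k\geq3$, $k\neq5$. Hence $\varepsilon_{H_k}^{\star}>0$. The only point needing care here is that the minimum is genuinely attained rather than being an infimum; continuity and compactness handle this, and this is why no explicit value comes out.

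For the consequence, fix $0\leq\varepsilon<\varepsilon_{H_k}^{\star}$ and suppose $\mathcal{G}_\varepsilon(\psi)$ contained a copy of $H_k$. I need to be careful that "a copy of $H_k$" means exactly $H_k(A)$ for some $(k+2)$-set $A$. This holds because any injective placement of the vertex set $[2k+1]$ is a permutation of $[2k+1]$, and a permutation carries $H_k(A)$ to $H_k(\pi(A))$.

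Every edge $B$ of that copy would then satisfy $D_B\leq\varepsilon$. This gives $F_A(\psi)\leq\varepsilon<\varepsilon_{H_k}^{\star}\leq F_A(\psi)$, a contradiction. So $\mathcal{G}_\varepsilon(\psi)$ is $H_k$-free.
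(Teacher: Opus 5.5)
Your proof is correct and follows the paper's argument. Compactness of the pure-state space, continuity of each $D_B$, and the finitely many choices of $A$ give an attained minimum, and a zero minimum would contradict Theorem~\ref{thm:odd-exact-forbidden}; your explicit handling of the ``consequently'' clause, including the check that every copy of $H_k$ on $[2k+1]$ is some $H_k(A)$, fills in a step the paper leaves implicit.
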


\begin{proof}
	The pure-state space is compact, the number of possible choices of
	$A$ is finite, and each $D_B$ is continuous in the state. Therefore,
	the minimum in Definition~\ref{def:odd-sharp-radius} is attained.
	
	If the minimum were zero, there would exist a pure state and a set
	$A$ such that every edge of $H_k(A)$ had zero defect. This would
	contradict Theorem~\ref{thm:odd-exact-forbidden}.
\end{proof}

\subsection{Finite-error identities}

Explicit lower bounds on $\varepsilon_{H_k}^{\star}$ are obtained from the
following finite-error identities.

Fix
\[
A\subseteq[2k+1],
\qquad
|A|=s:=k+2,
\]
and suppose that every edge of $H_k(A)$ is $\varepsilon$-good:
\[
D_B\leq\varepsilon
\qquad
\text{for every }
B\in E(H_k(A)).
\]

Let
\[
C=A^c,
\qquad
A_j=A\setminus\{j\},
\qquad
B_j=C\cup\{j\}.
\]

Since $C\subset B_j$, Lemma~\ref{lem:defect-monotonicity-odd} gives
\[
D_C\leq D_{B_j}\leq\varepsilon.
\]
Applying Lemma~\ref{lem:complement-defect-odd} to $A$ and $C$, whose
sizes differ by three, gives
\[
D_A+1=8(D_C+1).
\]
Hence
\[
\boxed{
	D_A\leq7+8\varepsilon.
}
\label{eq:odd-DA-upper}
\]

Similarly, since $A_j^c=B_j$ and
\[
|A_j|-|B_j|=1,
\]
one has
\[
D_{A_j}+1=2(D_{B_j}+1).
\]
Thus
\[
D_{A_j}=1+2D_{B_j}\geq1.
\]
Summing over $j\in A$ yields
\[
\boxed{
	\sum_{j\in A}D_{A_j}\geq s.
}
\label{eq:odd-S1-lower}
\]

Let
\[
a_w(A)
=
\sum_{\substack{P\in\mathcal{P}_s\\
		\operatorname{wt}(P)=w}}
r_P^2
\]
be the Pauli weight distribution of $\rho_A$.

\begin{lemma}[One- and two-deletion identities]
	\label{lem:odd-deletion-identities}
	Define
	\[
	S_1
	:=
	\sum_{j\in A}D_{A_j}
	\]
	and
	\[
	S_2
	:=
	\sum_{\substack{B\subset A\\|B|=k}}D_B.
	\]
	Then
	\[
	\boxed{
		S_1
		=
		\sum_{w=1}^{s-1}(s-w)a_w(A)
	}
	\]
	and
	\[
	\boxed{
		S_2
		=
		\sum_{w=1}^{k}
		\binom{s-w}{2}a_w(A).
	}
	\]
	Moreover,
	\[
	S_1\geq s
	\]
	and
	\[
	S_2\leq\binom{s}{2}\varepsilon.
	\]
\end{lemma}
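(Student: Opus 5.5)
The plan is to prove both identities by the same double-counting argument as in Lemma~\ref{lem:deletion-identity}, using Lemma~\ref{lem:marginal-defect-pauli} to write each marginal defect as a sum of squared Pauli coefficients of $\rho_A$. The two inequalities then come from the finite-error consequences already derived above this lemma.

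\textbf{Step 1: the identity for $S_1$.} For any subset $R\subseteq A$, Lemma~\ref{lem:marginal-defect-pauli} gives
\[
D_R=\sum_{P\neq I,\ \operatorname{supp}(P)\subseteq R} r_P^2,
\]
where the sum runs over Pauli strings on $A$. Fix a nonidentity $P$ of weight $w$. Then $r_P^2$ appears in $D_{A_j}$ exactly when $j\notin\operatorname{supp}(P)$, which happens for $s-w$ choices of $j$. Grouping by weight gives
\[
S_1=\sum_{w=1}^{s-1}(s-w)a_w(A),
\]
and the $w=s$ term drops out because its coefficient is zero. This is just Lemma~\ref{lem:deletion-identity} applied to $T=A$.

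\textbf{Step 2: the identity for $S_2$.} The $k$-subsets $B\subset A$ are exactly the complements in $A$ of $2$-subsets of $A$, since $s-k=2$. A string $P$ of weight $w$ has $\operatorname{supp}(P)\subseteq B$ if and only if the two deleted qubits both lie outside $\operatorname{supp}(P)$. There are $\binom{s-w}{2}$ such pairs, and this count vanishes for $w\geq s-1=k+1$. Summing over weights gives
\[
S_2=\sum_{w=1}^{k}\binom{s-w}{2}a_w(A).
\]

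\textbf{Step 3: the inequalities.} For $S_1\geq s$, I will cite the relation $D_{A_j}=1+2D_{B_j}\geq1$, which follows from Lemma~\ref{lem:complement-defect-odd} because $A_j^c=B_j$ and $|A_j|-|B_j|=1$. Summing over the $s$ indices $j\in A$ gives \eqref{eq:odd-S1-lower}. For $S_2\leq\binom{s}{2}\varepsilon$, note that every $k$-subset of $A$ is an edge of $H_k(A)$ and hence has $D_B\leq\varepsilon$. There are $\binom{s}{k}=\binom{s}{2}$ such subsets, so the bound follows.

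None of these steps is a real obstacle. The only point needing care is the bookkeeping in Step 2: the binomial coefficient must vanish for $w\in\{k+1,k+2\}$, which justifies truncating the sum at $w=k$.
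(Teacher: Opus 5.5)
Your proposal is correct and follows the paper's own proof. It uses the same support-counting to get the multiplicities $s-w$ and $\binom{s-w}{2}$, the bound $S_1\geq s$ from $D_{A_j}=1+2D_{B_j}$ via the complementary-defect identity, and $S_2\leq\binom{s}{2}\varepsilon$ from the $\binom{s}{k}=\binom{s}{2}$ internal edges of $H_k(A)$; your remark that $\binom{s-w}{2}$ vanishes for $w\in\{k+1,k+2\}$ just makes explicit the truncation the paper leaves implicit.
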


\begin{proof}
	A Pauli string of weight $w$ is retained in $A_j$ precisely when
	the deleted qubit $j$ is outside its support. It is therefore counted
	in exactly $s-w$ of the reduced states $\rho_{A_j}$. This proves the
	formula for $S_1$.
	
	A $k$-subset of $A$ is obtained by deleting two qubits. A weight-$w$
	Pauli string is retained precisely when both deleted qubits are
	outside its support. It is therefore counted
	\[
	\binom{s-w}{2}
	\]
	times. This proves the formula for $S_2$.
	
	The lower bound on $S_1$ follows from
	\eqref{eq:odd-S1-lower}. The upper bound on $S_2$ follows because
	there are
	\[
	\binom{s}{k}=\binom{s}{2}
	\]
	internal $k$-subsets and each has defect at most $\varepsilon$.
\end{proof}

Set
\[
h:=\binom{s}{2}.
\]

\subsection{A general explicit bound for \texorpdfstring{$k\geq6$}{k>=6}}

\begin{theorem}[General odd-system stability bound]
	\label{thm:odd-general-explicit}
	Let
	\[
	k\geq6.
	\]
	If every edge of $H_k(A)$ is $\varepsilon$-good, then necessarily
	\[
	\boxed{
		\varepsilon
		\geq
		\frac{k-5}
		{(k+2)(k+1)+8}.
	}
	\]
	Consequently,
	\[
	0\leq\varepsilon<
	\frac{k-5}
	{(k+2)(k+1)+8}
	\]
	implies that $\mathcal{G}_{\varepsilon}(\psi)$ is $H_k$-free.
\end{theorem}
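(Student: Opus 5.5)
The plan is to argue by contradiction. Assume every edge of $H_k(A)$ is $\varepsilon$-good and squeeze the weight-$(k+1)$ Pauli mass $a_{k+1}(A)$ of $\rho_A$ between two bounds. The lower bound will come from the one-deletion sum $S_1$, which is large. The upper bound will come from the total defect $D_A$, which is at most about $7$. The two-deletion sum $S_2$, which is small, will control the low-weight part. Only Lemma~\ref{lem:odd-deletion-identities}, inequality \eqref{eq:odd-DA-upper} and nonnegativity of the $a_w(A)$ are needed; the shadow inequality should not be required here.

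First I split $S_1$ into its low-weight part and its weight-$(k+1)$ part. With $s=k+2$, the term $w=s-1=k+1$ in the formula for $S_1$ carries coefficient $s-w=1$, so
\[
S_1=\sum_{w=1}^{k}(s-w)\,a_w(A)+a_{k+1}(A).
\]
Next I compare the coefficients $s-w$ with $\binom{s-w}{2}$ for $1\le w\le k$, where $n:=s-w\ge2$. For $n\ge3$ one has $n\le\binom{n}{2}$. For $n=2$, i.e.\ $w=k$, one has $n=2=2\binom{n}{2}$. In every case $s-w\le 2\binom{s-w}{2}$. Using nonnegativity of each $a_w(A)$ and the bound $S_2\le h\varepsilon$ from Lemma~\ref{lem:odd-deletion-identities}, this gives
\[
\sum_{w=1}^{k}(s-w)\,a_w(A)\le 2S_2\le 2h\varepsilon .
\]
Combined with $S_1\ge s$, this yields
\[
a_{k+1}(A)\ge s-2h\varepsilon .
\]

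For the upper bound, $D_A=\sum_{w=1}^{s}a_w(A)\ge a_{k+1}(A)$ because all terms are nonnegative. By \eqref{eq:odd-DA-upper}, which rests on monotonicity (giving $D_C\le\varepsilon$) and the complementary identity $D_A+1=8(D_C+1)$, we get $a_{k+1}(A)\le 7+8\varepsilon$. Hence $s-2h\varepsilon\le 7+8\varepsilon$, that is, $(2h+8)\varepsilon\ge s-7=k-5$. Since $2h=(k+2)(k+1)$, this is exactly the claimed bound
\[
\varepsilon\ge\frac{k-5}{(k+2)(k+1)+8}.
\]
The bound is positive precisely when $k\ge6$, which is why that hypothesis appears. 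The $H_k$-freeness consequence then follows by contraposition, applied to every $(k+2)$-subset $A$.

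The step I expect to be the main obstacle is the coefficient comparison at $w=k$. There the ratio $(s-w)/\binom{s-w}{2}$ is $2$ rather than at most $1$, and this forces the factor $2h$ instead of $h$. I would check that no sharper split improves the constant within this method. For example, one could try to bound the low-weight mass directly, noting $D_B\ge0$ for every internal $k$-subset $B$. Since the target denominator is exactly $(k+2)(k+1)+8=2h+8$, the crude factor $2$ appears to be precisely what the stated theorem uses, so the plan should close as described.
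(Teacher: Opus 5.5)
Your proof is correct and takes the paper's approach: split off the weight-$(k+1)$ term of $S_1$, bound the remaining low-weight part by $2S_2\le 2h\varepsilon$ using $s-w\le 2\binom{s-w}{2}$, and play $S_1\ge s$ against $a_{k+1}(A)\le D_A\le 7+8\varepsilon$. As you anticipated, the shadow inequality is not needed here; the paper uses it only for the improved even-$k$ bound.
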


\begin{proof}
	Using the first deletion identity,
	\[
	S_1
	=
	a_{s-1}(A)
	+
	\sum_{w=1}^{k}(s-w)a_w(A).
	\]
	For $w\leq k=s-2$, set
	\[
	t=s-w\geq2.
	\]
	Then
	\[
	t\leq t(t-1)=2\binom{t}{2}.
	\]
	Therefore,
	\[
	\sum_{w=1}^{k}(s-w)a_w(A)
	\leq
	2S_2.
	\]
	Hence
	\[
	S_1\leq a_{s-1}(A)+2S_2.
	\]
	
	Since
	\[
	a_{s-1}(A)\leq D_A,
	\]
	we obtain, using
	\eqref{eq:odd-DA-upper} and
	Lemma~\ref{lem:odd-deletion-identities},
	\[
	S_1
	\leq
	7+8\varepsilon+2h\varepsilon.
	\]
	On the other hand,
	\[
	S_1\geq s.
	\]
	Thus
	\[
	s
	\leq
	7+(2h+8)\varepsilon.
	\]
	Since
	\[
	s=k+2
	\]
	and
	\[
	2h=s(s-1)=(k+2)(k+1),
	\]
	it follows that
	\[
	\varepsilon
	\geq
	\frac{s-7}{2h+8}
	=
	\frac{k-5}{(k+2)(k+1)+8}.
	\]
\end{proof}

\subsection{An improved bound when \texorpdfstring{$k$}{k} is even}

When $k$ is even, the qubit shadow inequality gives a stronger estimate.

\begin{theorem}[Improved stability bound for even $k$]
	\label{thm:odd-even-k-explicit}
	Let
	\[
	k\geq4
	\]
	be even. If every edge of $H_k(A)$ is $\varepsilon$-good, then
	necessarily
	\[
	\boxed{
		\varepsilon
		\geq
		\frac{k-2}
		{(k+2)(k+1)+4}.
	}
	\]
	Consequently,
	\[
	0\leq\varepsilon<
	\frac{k-2}
	{(k+2)(k+1)+4}
	\]
	implies that $\mathcal{G}_{\varepsilon}(\psi)$ is $H_k$-free.
\end{theorem}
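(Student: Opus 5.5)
The plan is to repeat the argument of Theorem~\ref{thm:odd-general-explicit}, changing only one estimate. There the top non-full weight mass $a_{s-1}(A)$ was bounded crudely by $a_{s-1}(A)\leq D_A\leq 7+8\varepsilon$. When $k$ is even, $s=k+2$ is even, and the qubit shadow inequality (Lemma~\ref{lem:qubit-shadow}) applied to $\rho_A$ should roughly halve this bound. The aim is to show
\[
a_{s-1}(A)\leq 4+4\varepsilon .
\]

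\textbf{Step 1 (shadow bound on $a_{s-1}$).} Because $s$ is even, the full-weight term $a_s(A)$ sits in the even sector and $a_{s-1}(A)$ in the odd sector. Let $E_{\le k}$ and $O_{\le k}$ denote the sums of $a_w(A)$ over even and odd $w$ with $1\leq w\leq k$. Lemma~\ref{lem:qubit-shadow} then reads
\[
1+E_{\le k}-O_{\le k}-a_{s-1}(A)+a_s(A)\geq 0,
\]
which gives $a_{s-1}(A)\leq 1+E_{\le k}+a_s(A)$. Adding $a_{s-1}(A)$ to both sides gives
\[
2a_{s-1}(A)\leq 1+E_{\le k}+a_{s-1}(A)+a_s(A)\leq 1+D_A ,
\]
since every $a_w$ is nonnegative. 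Combining this with \eqref{eq:odd-DA-upper}, which says $D_A\leq 7+8\varepsilon$, yields $a_{s-1}(A)\leq 4+4\varepsilon$.

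\textbf{Step 2 (deletion sandwich).} This step is taken verbatim from the proof of Theorem~\ref{thm:odd-general-explicit}. By Lemma~\ref{lem:odd-deletion-identities},
\[
S_1=a_{s-1}(A)+\sum_{w\le k}(s-w)a_w(A).
\]
For every $t=s-w\geq 2$ we have $t\leq 2\binom{t}{2}$, so $S_1\leq a_{s-1}(A)+2S_2$. Using $S_2\leq h\varepsilon$ and Step~1, this becomes $S_1\leq 4+4\varepsilon+2h\varepsilon$. The lower bound $S_1\geq s=k+2$ comes from \eqref{eq:odd-S1-lower}. Together these give
\[
k+2\leq 4+(2h+4)\varepsilon .
\]
With $2h=(k+2)(k+1)$, this rearranges to $\varepsilon\geq (k-2)/((k+2)(k+1)+4)$, and the bound is positive for $k\geq4$.

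\textbf{Step 3 (conclusion).} The $H_k$-freeness statement follows by contraposition. If $\mathcal{G}_\varepsilon(\psi)$ contained a copy $H_k(A)$ with $\varepsilon$ below the threshold, Step~2 would be violated.

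The only delicate point is Step~1: checking that the parity assignment is correct for even $s$, namely that $a_{s-1}$ carries the minus sign and $a_s$ the plus sign in the shadow inequality. One must also confirm that discarding $-O_{\le k}$ and then absorbing $E_{\le k}+a_{s-1}+a_s$ into $D_A$ loses nothing essential. Everything else reuses estimates already established in the paper.
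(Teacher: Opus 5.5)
Your proof is correct and follows the paper's argument. The shadow inequality for even $s=k+2$ halves the bound on the odd-sector top mass to $4+4\varepsilon$, and the deletion sandwich $s\le S_1\le(\cdot)+2S_2\le(\cdot)+2h\varepsilon$ then gives the stated threshold. The only difference is bookkeeping: the paper bounds the larger quantity $O_A+a_{s-1}(A)\le 4+4\varepsilon$ and peels one unit off each odd-weight coefficient of $S_1$, whereas you discard $O_A$, bound $a_{s-1}(A)$ alone, and absorb all weights $w\le k$ into $2S_2$; both routes yield the same constant.
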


\begin{proof}
	Since $k$ is even,
	\[
	s=k+2
	\]
	is also even. Define
	\[
	E_A
	:=
	\sum_{\substack{1\leq w\leq k\\
			w\ \mathrm{even}}}
	a_w(A)
	\]
	and
	\[
	O_A
	:=
	\sum_{\substack{1\leq w\leq k\\
			w\ \mathrm{odd}}}
	a_w(A).
	\]
	
	The qubit shadow inequality is
	\[
	\sum_{w=0}^{s}(-1)^wa_w(A)\geq0.
	\]
	Since $s$ is even and $s-1$ is odd, this becomes
	\[
	1+E_A+a_s(A)-O_A-a_{s-1}(A)\geq0.
	\]
	Therefore,
	\[
	O_A+a_{s-1}(A)
	\leq
	1+E_A+a_s(A).
	\]
	
	Set
	\[
	X:=O_A+a_{s-1}(A),
	\qquad
	Y:=E_A+a_s(A).
	\]
	Then
	\[
	X\leq1+Y
	\]
	and
	\[
	X+Y=D_A\leq7+8\varepsilon.
	\]
	Consequently,
	\[
	2X
	\leq
	1+X+Y
	\leq
	8+8\varepsilon,
	\]
	so
	\[
	\boxed{
		O_A+a_{s-1}(A)\leq4+4\varepsilon.
	}
	\label{eq:odd-shadow-X}
	\]
	
	Next,
	\[
	\begin{aligned}
		S_1
		={}&
		a_{s-1}(A)
		+
		\sum_{\substack{1\leq w\leq k\\
				w\ \mathrm{odd}}}
		(s-w)a_w(A)\\
		&+
		\sum_{\substack{1\leq w\leq k\\
				w\ \mathrm{even}}}
		(s-w)a_w(A).
	\end{aligned}
	\]
	Rewrite this as
	\[
	\begin{aligned}
		S_1
		={}&
		\bigl(a_{s-1}(A)+O_A\bigr)\\
		&+
		\sum_{\substack{1\leq w\leq k\\
				w\ \mathrm{odd}}}
		(s-w-1)a_w(A)\\
		&+
		\sum_{\substack{1\leq w\leq k\\
				w\ \mathrm{even}}}
		(s-w)a_w(A).
	\end{aligned}
	\]
	
	For every $w\leq k$, with $t=s-w\geq2$, the remaining coefficient is
	at most
	\[
	t(t-1)
	=
	2\binom{t}{2}.
	\]
	Therefore,
	\[
	S_1
	\leq
	a_{s-1}(A)+O_A+2S_2.
	\]
	Using \eqref{eq:odd-shadow-X} and
	\[
	S_2\leq h\varepsilon,
	\]
	we obtain
	\[
	S_1
	\leq
	4+(4+2h)\varepsilon.
	\]
	Since $S_1\geq s$,
	\[
	s
	\leq
	4+(4+2h)\varepsilon.
	\]
	Thus
	\[
	\varepsilon
	\geq
	\frac{s-4}{2h+4}
	=
	\frac{k-2}{(k+2)(k+1)+4}.
	\]
\end{proof}

Combining the preceding results yields the following explicit formulation.

\begin{corollary}[Explicit odd-system stability intervals]
	\label{cor:odd-explicit-intervals}
	Let $n=2k+1$.
	
	If $k\geq4$ is even, then
	\[
	\boxed{
		0\leq\varepsilon<
		\frac{k-2}{(k+2)(k+1)+4}
	}
	\]
	implies that $\mathcal{G}_{\varepsilon}(\psi)$ is $H_k$-free.
	
	If $k\geq7$ is odd, then
	\[
	\boxed{
		0\leq\varepsilon<
		\frac{k-5}{(k+2)(k+1)+8}
	}
	\]
	implies that $\mathcal{G}_{\varepsilon}(\psi)$ is $H_k$-free.
\end{corollary}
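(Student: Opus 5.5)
This corollary is a direct assembly of the two preceding explicit theorems, so the plan is to split by the parity of $k$ and invoke the matching result in each case.

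\textbf{Even $k\ge4$.} Fix a pure state $|\psi\rangle$ on $2k+1$ qubits and a threshold $\varepsilon$ below the stated bound. Suppose, for contradiction, that $\mathcal{G}_\varepsilon(\psi)$ contains a copy of $H_k$. By Definition~\ref{def:odd-hypergraph}, such a copy is $H_k(A)$ for some $(k+2)$-subset $A\subseteq[2k+1]$, since the hypergraph is determined by the choice of $A$. Then every edge of $H_k(A)$ is $\varepsilon$-good. Theorem~\ref{thm:odd-even-k-explicit} then forces
\[
\varepsilon\ge\frac{k-2}{(k+2)(k+1)+4},
\]
which contradicts the assumption on $\varepsilon$.

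\textbf{Odd $k\ge7$.} The argument is the same, now using Theorem~\ref{thm:odd-general-explicit}, whose hypothesis $k\ge6$ is met. It gives
\[
\varepsilon\ge\frac{k-5}{(k+2)(k+1)+8},
\]
again a contradiction.

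\textbf{Points to check.}
\begin{enumerate}[(i)]
\item The explicit bounds are positive exactly in the stated ranges. For even $k\ge4$ we have $k-2>0$. For odd $k$, the numerator $k-5$ vanishes at $k=5$, and $k=3$ gives a negative bound, so neither case yields a nonempty interval there. This is why the odd case starts at $k=7$.
\item The notion of ``$H_k$-free'' means that no set $A$ of size $k+2$ has all edges of $H_k(A)$ good. Isomorphic copies of $H_k$ are automatically of this form, so no further case is needed.
\end{enumerate}
I expect no genuine obstacle. The only care needed is matching the parity and range hypotheses of the two theorems and keeping the threshold inequalities strict.
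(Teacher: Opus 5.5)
Your proposal is correct and is essentially the paper's own argument: the paper obtains the corollary by combining Theorem~\ref{thm:odd-even-k-explicit} for even $k\geq4$ with Theorem~\ref{thm:odd-general-explicit} (hypothesis $k\geq6$) for odd $k\geq7$, taking the contrapositive in each case. Your side remarks are also accurate: the radii are empty intervals for odd $k\le5$, and any isomorphic copy of $H_k$ in $\binom{[2k+1]}{k}$ has the form $H_k(A')$ for some $(k+2)$-set $A'$, because $H_k$ spans all $2k+1$ vertices and its definition is invariant under permutations.
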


\subsection{The Turán consequence}

\begin{corollary}[Odd-system quantum Turán bound]
	\label{cor:odd-quantum-turan}
	Whenever $\mathcal{G}_{\varepsilon}(\psi)$ is $H_k$-free,
	\[
	\boxed{
		Q_{\mathrm{ex},\varepsilon}^{D}(2k+1,k)
		\leq
		\operatorname{ex}_k(2k+1,H_k).
	}
	\]
	In particular, this holds throughout each interval in
	Corollary~\ref{cor:odd-explicit-intervals}.
\end{corollary}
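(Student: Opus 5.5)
This is a direct consequence of the definition of the Turán number, so the plan is to unwind definitions; there is no genuine obstacle. Fix $\varepsilon$ such that $\mathcal{G}_{\varepsilon}(\psi)$ is $H_k$-free for every pure state $|\psi\rangle$ on $2k+1$ qubits. By Corollary~\ref{cor:odd-explicit-intervals}, this holds for all $\varepsilon$ in the stated intervals. For $k$ with $\varepsilon<\varepsilon_{H_k}^{\star}$, it holds by Corollary~\ref{cor:odd-compactness-gap}.

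\emph{Step 1: $H_k$-freeness.} For any such $|\psi\rangle$, the family $\mathcal{G}_{\varepsilon}(\psi)$ is a $k$-uniform hypergraph on the vertex set $[2k+1]$. Being $H_k$-free means that for no $(k+2)$-subset $A$ does it contain all edges of $H_k(A)$. Every copy of $H_k$ inside $\binom{[2k+1]}{k}$ has the form $H_k(A)$ for some $A$. The only point I would check here is that "$H_k$-free" in the Turán sense (no subhypergraph isomorphic to $H_k$) matches the notion used in Corollary~\ref{cor:odd-explicit-intervals}. Since the vertex set is all of $[2k+1]$ and $H_k(A)$ spans all $2k+1$ vertices, any isomorphic copy is $H_k(A')$ for $A'$ the image of $A$. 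So the two notions coincide.

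\emph{Step 2: count edges.} By the definition of $\operatorname{ex}_k(2k+1,H_k)$ as the maximum number of edges of an $H_k$-free $k$-uniform hypergraph on $2k+1$ vertices, we get $|\mathcal{G}_{\varepsilon}(\psi)|\le \operatorname{ex}_k(2k+1,H_k)$ for every $|\psi\rangle$.

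\emph{Step 3: maximize.} Taking the maximum over $|\psi\rangle$ in \eqref{eq:robust-qex-global} gives the boxed bound. Applying this within each interval of Corollary~\ref{cor:odd-explicit-intervals} gives the "in particular" clause.
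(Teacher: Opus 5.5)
Your proposal is correct and follows the same route as the paper, which treats this corollary as immediate from the definition of the Turán number: if $\mathcal{G}_{\varepsilon}(\psi)$ is $H_k$-free for every pure state $|\psi\rangle$, then each $|\mathcal{G}_{\varepsilon}(\psi)|$ is at most $\operatorname{ex}_k(2k+1,H_k)$, and maximizing over $|\psi\rangle$ gives the bound. Two details you supply are correct and are left implicit in the paper: the hypothesis must be read as holding uniformly in $|\psi\rangle$, and every isomorphic copy of $H_k$ on $[2k+1]$ is of the form $H_k(A')$ for some $(k+2)$-set $A'$.
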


The following combinatorial estimate provides an explicit bound.

\begin{proposition}[A covering bound for $H_k$]
	\label{prop:odd-Hk-turan-bound}
	For every $k\geq2$,
	\[
	\boxed{
		\operatorname{ex}_k(2k+1,H_k)
		\leq
		\binom{2k+1}{k}
		-
		\left\lceil
		\frac{\binom{2k+1}{k+2}}
		{\binom{k+1}{2}+k}
		\right\rceil.
	}
	\]
\end{proposition}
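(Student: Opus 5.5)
The plan is a double-counting (covering) argument of the same type as the proof of Lemma~\ref{lem:turan-eight}. Fix an $H_k$-free $k$-uniform hypergraph $\mathcal{H}$ on $[2k+1]$ and let $\mathcal{M}=\binom{[2k+1]}{k}\setminus\mathcal{H}$ be its set of missing edges. Since $\mathcal{H}$ contains no copy of $H_k$, for every $(k+2)$-subset $A\subseteq[2k+1]$ the copy $H_k(A)$ is not contained in $\mathcal{H}$. Hence at least one edge of $H_k(A)$ lies in $\mathcal{M}$, so $\mathcal{M}$ meets every member of the family $\{E(H_k(A)):A\in\binom{[2k+1]}{k+2}\}$. (Here I read ``$H_k$-free'' as excluding the copies $H_k(A)$ indexed by $(k+2)$-sets $A$. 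These are exactly the patterns used in Theorem~\ref{thm:odd-exact-forbidden} and Corollary~\ref{cor:odd-explicit-intervals}. Any other labelled copy is of this form as well, since $A$ is recoverable from the structure.)

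Next I will double count the incidence pairs $(B,A)$ with $B\in\mathcal{M}$, $|A|=k+2$, and $B\in E(H_k(A))$. Counting over $A$ gives at least $\binom{2k+1}{k+2}$ pairs. For the count over $B$, fix a $k$-set $B$ and find how many $A$ have $B$ as an edge of $H_k(A)$; there are two disjoint cases.
\begin{itemize}
\item[(i)] $B\subseteq A$. Then $A=B\cup\{x,y\}$ with $\{x,y\}\subseteq B^c$, and $|B^c|=k+1$, so there are $\binom{k+1}{2}$ choices.
\item[(ii)] $|B\cap A|=1$. Then $C=A^c$ has size $k-1$ and $B=C\cup\{j\}$ for some $j\in A$. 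So $C\subseteq B$ is obtained by deleting one element $j$ of $B$, giving $k$ choices. Then $A=C^c$ is determined, and it does contain $j$.
\end{itemize}
For $k\geq2$ the two cases cannot overlap, because $|B\cap A|=k\neq1$. So each $B$ lies in exactly $\binom{k+1}{2}+k$ copies.

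Combining the two counts gives $|\mathcal{M}|\bigl(\binom{k+1}{2}+k\bigr)\geq\binom{2k+1}{k+2}$. Since $|\mathcal{M}|$ is an integer, taking the ceiling yields the lower bound on $|\mathcal{M}|$. Subtracting it from $\binom{2k+1}{k}$ gives the claimed upper bound on $|\mathcal{H}|$, and maximizing over $\mathcal{H}$ gives the bound on $\operatorname{ex}_k(2k+1,H_k)$.

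The only delicate step is the exact count in case (ii). One must check that the map $B\mapsto C=B\setminus\{j\}$ really produces a valid $A$ of size $k+2$ containing $j$; it does, since $|C^c|=2k+1-(k-1)=k+2$. Getting this count exactly right, without overcounting, is what determines the denominator.
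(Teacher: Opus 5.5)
Your proposal is correct and is essentially the paper's argument: one copy $H_k(A)$ per $(k+2)$-set $A$, each $k$-set $B$ lying in exactly $\binom{k+1}{2}+k$ of them via the two cases $B\subseteq A$ and $A=B^c\cup\{b\}$ with $b\in B$, and a double count forcing at least $\bigl\lceil\binom{2k+1}{k+2}/(\binom{k+1}{2}+k)\bigr\rceil$ missing edges. Your aside about other labelled copies is not needed, because the bound only uses the immediate fact that an $H_k$-free family contains no $H_k(A)$.
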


\begin{proof}
	There is one copy of $H_k(A)$ for every choice of a $(k+2)$-subset
	$A$. Hence the complete $k$-uniform hypergraph contains
	\[
	\binom{2k+1}{k+2}
	\]
	such copies.
	
	Fix a $k$-edge $B$. It belongs to $H_k(A)$ in two possible ways.
	
	First, if $B\subset A$, then the remaining two vertices of $A$ can
	be chosen from the $k+1$ vertices outside $B$, giving
	\[
	\binom{k+1}{2}
	\]
	choices.
	
	Second, if $|B\cap A|=1$, then
	\[
	A=B^c\cup\{b\}
	\]
	for some $b\in B$, giving $k$ choices.
	
	Thus a single removed $k$-edge destroys at most
	\[
	\binom{k+1}{2}+k
	\]
	copies of $H_k$. To destroy all
	\[
	\binom{2k+1}{k+2}
	\]
	copies, at least
	\[
	\left\lceil
	\frac{\binom{2k+1}{k+2}}
	{\binom{k+1}{2}+k}
	\right\rceil
	\]
	edges must be removed.
\end{proof}

\subsection{The nine-qubit case}

Specializing to
\[
n=9,
\qquad
k=4
\]

Fix
\[
A\subseteq[9],
\qquad
|A|=6,
\]
and let
\[
C=A^c,
\qquad
|C|=3.
\]

The hypergraph $H_4(A)$ consists of

\[
\binom64=15
\]
four-subsets contained in $A$, together with the six edges
\[
C\cup\{j\},
\qquad
j\in A.
\]
Thus
\[
|E(H_4(A))|=21.
\]

\begin{corollary}[Nine-qubit stability bound]
	\label{cor:nine-qubit-stability}
	For every pure state on nine qubits,
	\[
	0\leq\varepsilon<\frac1{17}
	\]
	implies that $\mathcal{G}_{\varepsilon}(\psi)$ is $H_4$-free.
	Consequently,
	\[
	\boxed{
		Q_{\mathrm{ex},\varepsilon}^{D}(9,4)
		\leq120,
		\qquad
		0\leq\varepsilon<\frac1{17}.
	}
	\]
\end{corollary}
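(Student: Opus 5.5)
The plan is to specialize the even-$k$ theory to $k=4$ and then evaluate the covering bound numerically. First, $k=4$ is even and $k\geq4$, so Theorem~\ref{thm:odd-even-k-explicit} (equivalently, the even case of Corollary~\ref{cor:odd-explicit-intervals}) applies. It says that if every edge of some $H_4(A)$ is $\varepsilon$-good, then
\[
\varepsilon\geq\frac{k-2}{(k+2)(k+1)+4}=\frac{2}{6\cdot5+4}=\frac{2}{34}=\frac1{17}.
\]
Hence, for $0\leq\varepsilon<1/17$, no $(k+2)=6$-subset $A\subseteq[9]$ can have all $21$ edges of $H_4(A)$ good. This means $\mathcal{G}_{\varepsilon}(\psi)$ is $H_4$-free for every pure nine-qubit state.

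For orientation, I would recall what the argument of Theorem~\ref{thm:odd-even-k-explicit} uses in this instance, with $s=6$ and $h=\binom62=15$. The six complementary edges $C\cup\{j\}$ force $S_1\geq6$ through Lemma~\ref{lem:complement-defect-odd}. Monotonicity together with the complement identity for the pair $(A,C)$ gives $D_A\leq7+8\varepsilon$. The qubit shadow inequality, Lemma~\ref{lem:qubit-shadow}, applied to the even-size subsystem $A$ caps the odd-weight mass: $O_A+a_5(A)\leq4+4\varepsilon$. The coefficient comparison $t\leq t(t-1)$ for $t=s-w\geq2$ then bounds the remaining terms by $2S_2\leq30\varepsilon$. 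Together these give $6\leq4+34\varepsilon$. No new ingredient is needed here; the work is only the substitution.

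Second, I would invoke Corollary~\ref{cor:odd-quantum-turan} to obtain $Q_{\mathrm{ex},\varepsilon}^{D}(9,4)\leq\operatorname{ex}_4(9,H_4)$ on this interval. Then I would apply Proposition~\ref{prop:odd-Hk-turan-bound} with $k=4$. We have $\binom94=126$ and $\binom96=84$ copies of $H_4$, and each removed edge destroys at most $\binom52+4=14$ copies. So at least $\lceil84/14\rceil=6$ edges must be missing, giving $\operatorname{ex}_4(9,H_4)\leq126-6=120$.

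There is no genuine obstacle. The only points requiring care are arithmetic: checking that the denominator $(k+2)(k+1)+4=34$ reduces the ratio exactly to $1/17$, and that $84/14$ is an integer so the ceiling contributes nothing. The inequality in $\varepsilon$ stays strict, consistent with the convention $D_B\leq\varepsilon$ discussed after Corollary~\ref{cor:quantum-turan-4m}.
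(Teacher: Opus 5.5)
Your proposal is correct and follows the paper's own proof essentially step for step. You specialize Theorem~\ref{thm:odd-even-k-explicit} to $k=4$ to get the radius $2/34=1/17$, then combine Corollary~\ref{cor:odd-quantum-turan} with Proposition~\ref{prop:odd-Hk-turan-bound} to obtain $\binom94-\lceil 84/14\rceil=126-6=120$.
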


\begin{proof}
	Theorem~\ref{thm:odd-even-k-explicit} with $k=4$ gives
	\[
	\varepsilon
	\geq
	\frac{4-2}{(4+2)(4+1)+4}
	=
	\frac2{34}
	=
	\frac1{17}
	\]
	whenever all edges of a copy of $H_4$ are $\varepsilon$-good.
	Therefore, for $\varepsilon<1/17$, the good-edge hypergraph is
	$H_4$-free.
	
	Furthermore, Proposition~\ref{prop:odd-Hk-turan-bound} gives
	\[
	\begin{aligned}
		\operatorname{ex}_4(9,H_4)
		&\leq
		\binom94
		-
		\left\lceil
		\frac{\binom96}{\binom52+4}
		\right\rceil\\
		&=
		126-\left\lceil\frac{84}{14}\right\rceil\\
		&=
		120.
	\end{aligned}
	\]
\end{proof}

The value
\[
\frac1{17}\approx0.05882
\]
is an explicit certified radius. It is not claimed to be the sharp
value of $\varepsilon_{H_4}^{\star}$.

\subsection{The seven- and eleven-qubit exceptional cases}

\noindent\textbf{Seven qubits.}
For
\[
n=7,
\qquad
k=3,
\]
Theorem~\ref{thm:odd-exact-forbidden} shows that the exact good-edge
hypergraph is $H_3$-free. Therefore,
\[
\varepsilon_{H_3}^{\star}>0
\]
by compactness.

However, the scalar estimates used in
Theorems~\ref{thm:odd-general-explicit} and
\ref{thm:odd-even-k-explicit} do not provide a positive explicit
lower bound when $k=3$. Obtaining a useful explicit radius in this
case requires a robust version of the operator identity
\[
\{U,V\}=6V.
\]

At zero error, Proposition~\ref{prop:odd-Hk-turan-bound} gives
\[
\operatorname{ex}_3(7,H_3)
\leq
\binom73
-
\left\lceil
\frac{\binom75}{\binom42+3}
\right\rceil
=
35-\left\lceil\frac{21}{9}\right\rceil
=
32.
\]

\medskip

\noindent\textbf{Eleven qubits.}
For
\[
n=11,
\qquad
k=5,
\]
the exact operator proof degenerates because
\[
V|\psi\rangle=(5-k)|\psi\rangle=0.
\]
The identity
\[
\{U,V\}=6V
\]
then gives no contradiction after acting on the state.

Thus the preceding argument does not establish that $H_5$ is forbidden, and
Theorem~\ref{thm:odd-exact-forbidden} does not imply
\[
\varepsilon_{H_5}^{\star}>0.
\] The case $k=5$ must be
treated by a different local forbidden pattern or by additional
quantum constraints.

\subsection{Summary of the odd-system results}

For an odd number
\[
n=2k+1
\]
of qubits, the natural forbidden hypergraph is
\[
H_k(A)
=
\left\{
B\in\binom{[2k+1]}k:
|B\cap A|=k
\ \text{or}\
|B\cap A|=1
\right\},
\qquad
|A|=k+2.
\]

The exact result is
\[
\boxed{
	k\geq3,\ k\neq5
	\quad\Longrightarrow\quad
	\mathcal{G}_0(\psi)
	\text{ is }H_k\text{-free}.
}
\]

The explicit finite-error results are
\[
\boxed{
	\begin{aligned}
		k\geq4,\quad k\ \mathrm{even}:\qquad
		&
		\varepsilon<
		\frac{k-2}{(k+2)(k+1)+4},
		\\[2mm]
		k\geq7,\quad k\ \mathrm{odd}:\qquad
		&
		\varepsilon<
		\frac{k-5}{(k+2)(k+1)+8}.
	\end{aligned}
}
\]

Throughout these intervals,
\[
\boxed{
	Q_{\mathrm{ex},\varepsilon}^{D}(2k+1,k)
	\leq
	\operatorname{ex}_k(2k+1,H_k).
}
\]

For nine qubits, this specializes to
\[
\boxed{
	Q_{\mathrm{ex},\varepsilon}^{D}(9,4)
	\leq120,
	\qquad
	0\leq\varepsilon<\frac1{17}.
}
\]

\section{Discussion}\label{sec:discussion}

This work introduces a robust quantum extremal number that counts the marginals
of a pure state lying within a prescribed marginal maximal-mixing-defect
threshold. The principal conclusion is that local quantum consistency
constraints can retain combinatorial rigidity under finite perturbations. For $4m$ qubits, a
shadow inequality and a Schmidt-rank bound force a unit amount of aggregate
defect on every $(2m+1)$-set. This converts directly into a robust complete-
hypergraph exclusion. At the level of the extremal count, the eight-qubit case yields the exact
constant value $56$ throughout the certified interval
$0\leq\varepsilon<1/5$.

Several qualifications are important. First, the endpoint is excluded because
the definition of a good marginal uses $D_A\leq\varepsilon$; the present
inequalities allow equality at the certified threshold. Second, the radii in
the odd case are explicit lower bounds on the true stability radii and are not
claimed to be optimal. Third, compactness proves the existence of a positive
radius for seven qubits, but the scalar estimates used here do not quantify
it. Finally, the eleven-qubit case $k=5$ is not controlled by the present
$H_k$ operator argument and may require a different forbidden configuration,
a stronger shadow inequality, or a semidefinite/linear-programming treatment
of the local Pauli data.

Several directions remain for further study: determining the sharp local
stability radii, replacing the covering bounds by exact or asymptotically sharp
Tur\'an numbers, and extending the framework to qudits and heterogeneous local
dimensions. The norm conversions in Eqs.~\eqref{eq:defect-hs-threshold} and
\eqref{eq:defect-trace-threshold} already connect the defect to the
Hilbert--Schmidt and trace-distance formulations of approximate uniformity.
Further work may sharpen these comparisons and investigate fidelity-based
thresholds. The results above provide a rigorous starting point by showing
that quantum extremal-number phenomena can possess nontrivial finite-error
stability intervals while remaining compatible with physically motivated
state-level approximation criteria.

\section{ACKNOWLEDGMENTS}
The research of Wanchen Zhang is supported by the Innovation Program for Quantum Science and Technology under grant no. 2025ZD0300102.
The research of Xiande Zhang is supported by the National Key Research and Development Program of China 2023YFA1010200, the NSFC under Grants No. 12171452 and No.12231014, and the Quantum Science and Technology-National Science and Technology Major Project 2021ZD0302902. 

\bibliographystyle{apsrev4-2}
\bibliography{robust_extremal_statelevel}

@article{HiguchiSudbery2000,
  author  = {Atsushi Higuchi and Anthony Sudbery},
  title   = {How Entangled Can Two Couples Get?},
  journal = {Physics Letters A},
  volume  = {273},
  number  = {4},
  pages   = {213--217},
  year    = {2000},
  doi     = {10.1016/S0375-9601(00)00480-1},
  eprint  = {quant-ph/0005013},
  archivePrefix = {arXiv}
}

@article{Scott2004,
  author  = {A. J. Scott},
  title   = {Multipartite Entanglement, Quantum-Error-Correcting Codes, and Entangling Power of Quantum Evolutions},
  journal = {Physical Review A},
  volume  = {69},
  pages   = {052330},
  year    = {2004},
  doi     = {10.1103/PhysRevA.69.052330},
  eprint  = {quant-ph/0310137},
  archivePrefix = {arXiv}
}

@article{Helwig2012,
  author  = {Wolfgang Helwig and Wei Cui and Jos\'e I. Latorre and Arnau Riera and Hoi-Kwong Lo},
  title   = {Absolute Maximal Entanglement and Quantum Secret Sharing},
  journal = {Physical Review A},
  volume  = {86},
  pages   = {052335},
  year    = {2012},
  doi     = {10.1103/PhysRevA.86.052335},
  eprint  = {1204.2289},
  archivePrefix = {arXiv},
  primaryClass = {quant-ph}
}

@article{Huber2017Seven,
  author  = {Felix Huber and Otfried G\"uhne and Jens Siewert},
  title   = {Absolutely Maximally Entangled States of Seven Qubits Do Not Exist},
  journal = {Physical Review Letters},
  volume  = {118},
  pages   = {200502},
  year    = {2017},
  doi     = {10.1103/PhysRevLett.118.200502},
  eprint  = {1608.06228},
  archivePrefix = {arXiv},
  primaryClass = {quant-ph}
}

@article{Rains1999QuantumShadow,
  author  = {E. M. Rains},
  title   = {Quantum Shadow Enumerators},
  journal = {IEEE Transactions on Information Theory},
  volume  = {45},
  number  = {7},
  pages   = {2361--2366},
  year    = {1999},
  doi     = {10.1109/18.796376},
  eprint  = {quant-ph/9611001},
  archivePrefix = {arXiv}
}

@article{Huber2018Shadow,
  author  = {Felix Huber and Christopher Eltschka and Jens Siewert and Otfried G\"uhne},
  title   = {Bounds on Absolutely Maximally Entangled States from Shadow Inequalities, and the Quantum {MacWilliams} Identity},
  journal = {Journal of Physics A: Mathematical and Theoretical},
  volume  = {51},
  number  = {17},
  pages   = {175301},
  year    = {2018},
  doi     = {10.1088/1751-8121/aab1cd},
  eprint  = {1708.06298},
  archivePrefix = {arXiv},
  primaryClass = {quant-ph}
}

@article{Zhang2025Extremal,
  author  = {Wanchen Zhang and Yu Ning and Fei Shi and Xiande Zhang},
  title   = {Extremal Maximal Entanglement},
  journal = {Physical Review A},
  volume  = {111},
  pages   = {052410},
  year    = {2025},
  doi     = {10.1103/PhysRevA.111.052410},
  eprint  = {2411.12208},
  archivePrefix = {arXiv},
  primaryClass = {quant-ph}
}

@misc{Guo2025Approximate,
  author       = {Kaiyi Guo and Fei Shi and You Zhou and Qi Zhao},
  title        = {Approximate $k$-Uniform States: Definition, Construction and Applications},
  year         = {2025},
  eprint       = {2507.19018},
  archivePrefix = {arXiv},
  primaryClass = {quant-ph},
  note         = {arXiv:2507.19018}
}

@misc{GonzalezLocigaBall2026,
  author       = {David Gonz\'alez-Lociga and Simeon Ball},
  title        = {The Mixed-Dimensional Quantum {MacWilliams} Identity: Bounds for Codes and Absolutely Maximally Entangled States in Heterogeneous Systems},
  year         = {2026},
  eprint       = {2604.25790},
  archivePrefix = {arXiv},
  primaryClass = {quant-ph},
  note         = {arXiv:2604.25790}
}

@misc{ZengZhang2026,
  author       = {Shixuan Zeng and Xiande Zhang},
  title        = {An Explicit Scott-Type Bound for Absolutely Maximally Entangled States with Arbitrary Defect},
  year         = {2026},
  eprint       = {2606.01943},
  archivePrefix = {arXiv},
  primaryClass = {quant-ph},
  note         = {arXiv:2606.01943}
}

@article{Shi2025Bounds,
  author  = {Fei Shi and Yu Ning and Qi Zhao and Xiande Zhang},
  title   = {Bounds on $k$-Uniform Quantum States},
  journal = {IEEE Transactions on Information Theory},
  volume  = {71},
  number  = {1},
  pages   = {413--425},
  year    = {2025},
  doi     = {10.1109/TIT.2024.3481042},
  eprint  = {2310.06378},
  archivePrefix = {arXiv},
  primaryClass = {quant-ph}
}

@article{Temme2010Chi2,
  author  = {Kristan Temme and Michael J. Kastoryano and Mary Beth Ruskai and Michael M. Wolf and Frank Verstraete},
  title   = {The $\chi^2$-Divergence and Mixing Times of Quantum Markov Processes},
  journal = {Journal of Mathematical Physics},
  volume  = {51},
  number  = {12},
  pages   = {122201},
  year    = {2010},
  doi     = {10.1063/1.3511335}
}

@article{ZhaoZhouChilds2025Entanglement,
  author  = {Qi Zhao and You Zhou and Andrew M. Childs},
  title   = {Entanglement Accelerates Quantum Simulation},
  journal = {Nature Physics},
  volume  = {21},
  pages   = {1338--1345},
  year    = {2025},
  doi     = {10.1038/s41567-025-02945-2}
}

@article{RevModPhys.80.517,
	title = {Entanglement in many-body systems},
	author = {Amico, Luigi and Fazio, Rosario and Osterloh, Andreas and Vedral, Vlatko},
	journal = {Reviews of Modern Physics},
	volume = {80},
	issue = {2},
	pages = {517--576},
	numpages = {0},
	year = {2008},
	publisher = {American Physical Society},
}

@article{Borras_2007,
	year = {2007},
	publisher = {IOP Publishing},
	volume = {40},
	number = {44},
	pages = {13407},
	author = {A Borras and A R Plastino and J Batle and C Zander and M Casas and A Plastino},
	title = {Multiqubit systems: highly entangled states and entanglement distribution},
	journal = {Journal of Physics A: Mathematical and Theoretical},
}

@article{PhysRevA.87.012319,
	title = {Exploring pure quantum states with maximally mixed reductions},
	author = {Arnaud, Ludovic and Cerf, Nicolas J.},
	journal = {Physical Review A},
	volume = {87},
	issue = {1},
	pages = {012319},
	numpages = {9},
	year = {2013},
	publisher = {American Physical Society},
}

@article{PhysRevA.77.060304,
	title = {Maximally multipartite entangled states},
	author = {Facchi, Paolo and Florio, Giuseppe and Parisi, Giorgio and Pascazio, Saverio},
	journal = {Physical Review A},
	volume = {77},
	issue = {6},
	pages = {060304},
	numpages = {4},
	year = {2008},
	publisher = {American Physical Society},
}

@article{PhysRevA.86.052335,
	title = {Absolute maximal entanglement and quantum secret sharing},
	author = {Helwig, Wolfram and Cui, Wei and Latorre, Jos\'e Ignacio and Riera, Arnau and Lo, Hoi-Kwong},
	journal = {Physical Review A},
	volume = {86},
	issue = {5},
	pages = {052335},
	numpages = {5},
	year = {2012},
	publisher = {American Physical Society},
}

@article{PhysRevA.91.042339,
	title = {Characterizing multipartite entanglement without shared reference frames},
	author = {Kl\"ockl, C. and Huber, M.},
	journal = {Physical Review A},
	volume = {91},
	issue = {4},
	pages = {042339},
	numpages = {8},
	year = {2015},
	publisher = {American Physical Society},
}

@article{PhysRevA.92.032316,
	title = {Absolutely maximally entangled states, combinatorial designs, and multiunitary matrices},
	author = {Goyeneche, Dardo and Alsina, Daniel and Latorre, Jos\'e I. and Riera, Arnau and \ifmmode \dot{Z}\else \.{Z}\fi{}yczkowski, Karol},
	journal = {Physical Review A},
	volume = {92},
	issue = {3},
	pages = {032316},
	numpages = {15},
	year = {2015},
	publisher = {American Physical Society},
}

@article{Brown_2005,
	year = {2005},
	publisher = {IOP Publishing},
	volume = {38},
	number = {5},
	pages = {1119},
	author = {Iain D K Brown and Susan Stepney and Anthony Sudbery and Samuel L Braunstein},
	title = {Searching for highly entangled multi-qubit states},
	journal = {Journal of Physics A: Mathematical and General},
}
\end{document}